\newtheorem{definition}{Definition}
\newtheorem{theorem}{Theorem}
\newtheorem{corollary}{Corollary}
\title{Generalised Likelihood Ratio Testing Adversaries through the Differential Privacy Lens}
\author[1, 2, 4]{Georgios Kaissis \thanks{Corresponding author e-mail: g.kaissis@tum.de}}
\author[1, 4]{Alexander Ziller}
\author[3, 4]{Stefan Kolek Martinez de Azagra}
\author[1, 2]{Daniel Rueckert}
\affil[1]{Artificial Intelligence in Medicine and Healthcare, Technical University of Munich}
\affil[2]{Department of Computing, Imperial College London}
\affil[3]{Mathematical Foundations of Artificial Intelligence, LMU Munich}
\affil[4]{These authors contributed equally}
\date{}
\begin{document}
\maketitle

\begin{abstract}
Differential Privacy (DP) provides tight upper bounds on the capabilities of optimal adversaries, but such adversaries are rarely encountered in practice. Under the hypothesis testing/membership inference interpretation of DP, we examine the Gaussian mechanism and relax the usual assumption of a Neyman-Pearson-Optimal (NPO) adversary to a Generalized Likelihood Test (GLRT) adversary. This mild relaxation leads to improved privacy guarantees (see Figure \ref{fig:teaser} below), which we express in the spirit of Gaussian DP and $(\varepsilon, \delta)$-DP, including composition and sub-sampling results. We evaluate our results numerically and find them to match the theoretical upper bounds.
\end{abstract}

\begin{figure}[h!]
    \centering
    \includegraphics[width=0.45\textwidth]{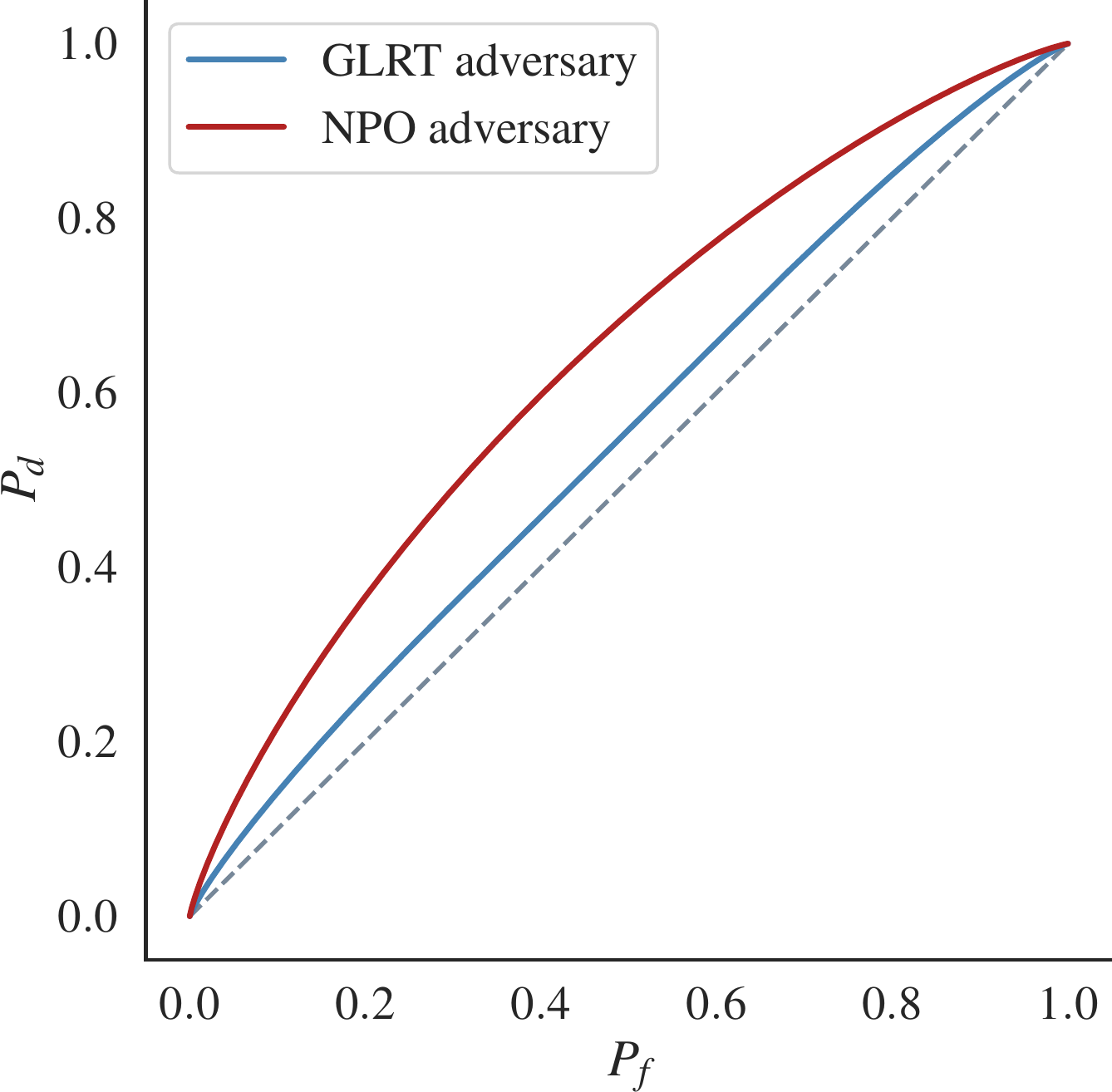}
    \caption{Our results at a glance: A minimal relaxation of the threat model from an NPO adversary (red curve) to a GLRT adversary leads to a substantially more optimistic outlook on privacy loss (blue curve) from $(\varepsilon, \delta)=(0.95, 10^{-4})$-DP to $(0.37, 10^{-4})$-DP at $\Delta/\sigma=0.5$. $P_{d/f}$: Probability of detection/false-positive.}
    \label{fig:teaser}
\end{figure}

\section{Introduction}

Differential Privacy (DP) and its applications to machine learning (ML) have established themselves as the tool of choice for statistical analyses on sensitive data. They allow analysts working with such data to obtain useful insights while offering objective guarantees of privacy to the individuals whose data is contained within the dataset. DP guarantees are typically realised through the addition of calibrated noise to statistical queries. The randomisation of queries however introduces an unavoidable \say{tug-of-war} between privacy and accuracy, the so-called \textit{privacy-utility trade-off}. This trade-off is undesirable and may be among the principal deterrents from the widespread willingness to commit to the usage of DP in statistical analyses.

The main reason why DP is considered harmful for utility is perhaps an incomplete understanding of its very formulation: In its canonical definition, DP is a worst-case guarantee against a very powerful (i.e. optimal) adversary with access to unbounded computational power and auxiliary information \cite{tschantz2020sok}. Erring on the side of security in this way is prudent, as it means that DP bounds always hold for weaker adversaries. However, the privacy guarantee of an algorithm under realistic conditions, where such adversaries may not exist, could be more optimistic than indicated. This naturally leads to the question what the \say{actual} privacy guarantees of algorithms are under relaxed adversarial assumptions.  

Works on empirical verification of DP guarantees \cite{carlini2022membership, jagielski2020auditing, nasr2021adversary} have recently led to two general findings:
\begin{enumerate}
    \item The DP guarantee in the worst case is (almost) tight, meaning that an improved analysis is not able to offer stronger bounds on existing algorithms under the same assumptions;
    \item A relaxation of the threat model on the other hand leads to dramatic improvements in the empirical DP guarantees of the algorithm.
\end{enumerate}

Motivated by these findings, we initiate an investigation into a minimal threat model relaxation which results in an \say{almost optimal} adversary. Complementing the aforementioned empirical works, which instantiate adversaries who conduct membership inference tests, we assume a formal viewpoint but retain the hypothesis testing framework. Our contributions can be summarised as follows:
\begin{itemize}
    \item We begin by introducing a mild formal relaxation of the usual DP assumption of a Neyman-Pearson-Optimal (NPO) adversary to a Generalised Likelihood Ratio Testing (GLRT) adversary. We discuss the operational significance of this formal relaxation in Section \ref{sec:background};
    \item In this setting, we provide tight privacy guarantees for the Gaussian mechanism in the spirit of Gaussian DP (GDP) and $(\varepsilon, \delta)$-DP, which we show to be considerably stronger than under the worst-case assumptions, especially in the high privacy regime.
    \item We provide composition results and subsampling guarantees for our bounds for use e.g. in deep learning applications.
    \item We find that --contrary to the worst-case setting-- the performance of the adversary in the GLRT relaxation is dependent on the dimensionality of the query, with high-dimensional queries having stronger privacy guarantees. We link this phenomenon to the asymptotic convergence of our bounds to an amplified GDP guarantee. 
    \item Finally, we experimentally evaluate our bounds, showing them to be tight against empirical adversaries. 
\end{itemize}

\section{Prior Work}
\textbf{Empirical verification of DP}: Several prior works have investigated DP guarantees from an empirical point-of-view. For instance, \cite{jagielski2020auditing} utilised data poisoning attacks to verify the privacy guarantees of DP-SGD, while \cite{nasr2021adversary} \textit{instantiate} adversaries in a variety of settings and test their membership inference capabilities. A similar work in this spirit is \cite{humphries2020differentially}.  

\textbf{Formalisation of membership inference attacks}: \cite{shokri2017membership} is among the earliest works to formalise the notion of a membership inference attack against a machine learning model albeit in a \textit{black-box} setting, where the adversary only has access to predictions from a targeted machine learning model. Follow-up works like \cite{ye2021enhanced, carlini2022membership} have extended the attack framework to a variety of settings. Recent works by \cite{sablayrolles2019white} or by \cite{mahloujifar2022optimal} have also provided formal bounds on membership inference success in a DP setting.

\textbf{Software tools and empirical mitigation strategies}: Alongside the aforementioned works, a variety of software tools has been proposed to \textit{audit} the privacy guarantees of ML systems, such as \textit{ML-Doctor} \cite{liu2022ml} or \textit{ML Privacy Meter} \cite{ye2021enhanced}. Such tools operate on the premises related to the aforementioned \textit{adversary instantiation}.

Of note, DP is not the only technique to defend against membership inference attacks (although it is among the few formal ones). Works like \cite{liu2021generalization, usynin2022zen} have proposed so-called \textit{model adaptation} strategies, that is, methods which empirically harden the model against attacks without necessarily offering formal guarantees.

\textbf{Gaussian DP, numerical composition and subsampling amplification}: Our analysis relies heavily on the hypothesis testing interpretation of DP and specifically Gaussian DP (GDP) \cite{dong2021gaussian}, however we present our privacy bounds in terms of the more familiar Receiver-Operator-Characteristic (ROC) curve similarly to \cite{Kaissis_Knolle_Jungmann_Ziller_Usynin_Rueckert_2022}. We note that for the purposes of the current work, the guarantees are identical. Some of our guarantees have no tractable analytic form, instead requiring numerical computations, similar to \cite{gopi2021numerical, zhu2022optimal}. We make strong use of the duality between GDP and \textit{privacy profiles} for privacy amplification by subsampling, a technique described in \cite{balle2020privacy}.

\section{Background}\label{sec:background}
\subsection{The DP threat model}
We begin by briefly formulating the DP threat model in terms of an abstract, non-cooperative \textit{membership inference game}. This will then allow us to relax this threat model and thus present our main results in a more comprehensible way. Throughout, we assume two parties, a \textit{curator} $\mathcal{C}$ and an \textit{adversary} $\mathcal{A}$ and will limit our purview to the Gaussian mechanism of DP.

\begin{definition}[DP membership inference game]
Under the DP threat model, the game can be reduced to the following specifications. We note that any added complexity beyond the level described below can only serve to make the game harder for $\mathcal{A}$ and thus improve privacy.
\begin{enumerate}
    \item The adversary $\mathcal{A}$ selects a function $f: \mathcal{X} \rightarrow \mathbb{R}^n$ where $\mathcal{X}$ is the space of datasets with (known) global $\ell_2$-sensitivity $\Delta$ and crafts two adjacent datasets $D$ and $D'$ such that $D \coloneqq \lbrace A \rbrace$ and $D' \coloneqq \lbrace A, B \rbrace$. Here, $A,B$ are the data of two individuals and \textbf{fully known} to $\mathcal{A}$. We denote the adjacency relationship by $\simeq$. 
    \item The curator $\mathcal{C}$ secretly evaluates either $f(D)$ or $f(D')$ and publishes the result $y$ with Gaussian noise of variance $\sigma^2\mathbf{I}^n$ calibrated to $\Delta$.
    \item The adversary $\mathcal{A}$, using all available information, determines whether $D$ or $D'$ was used for computing $y$.
\end{enumerate}
The game is considered won by the adversary if they make a correct determination.
\end{definition}
Under this threat model, the process of computing the result and releasing it with Gaussian noise is the DP mechanism. Note that the aforementioned problem can be reformulated as the problem of detecting the presence of a single individual given the output. This gives rise to the description typically associated with DP guarantees: \say{DP guarantees hold even if the adversary has access to the data of all individuals except the one being tested}. The reason for this is that, due to their knowledge of the data and the function $f$, $\mathcal{A}$ can always \say{shift} the problem so that (WLOG) $f(A) = 0$, from which it follows that $f(B) = \Delta$ (where the strict equality is due to the presence of only two points in the dataset and consistent with the DP guarantee).

More formally, the problem can thus be expressed as the following one-sided hypothesis test:
\begin{equation}
    \mathcal{H}_0: y = Z \;\; \text{vs.} \;\; \mathcal{H}_1: y = \Delta + Z, Z \sim \mathcal{N}(0, \sigma^2)
\end{equation}
and is equivalent to asking $\mathcal{A}$ to distinguish the distributions $\mathcal{N}(0, \sigma^2)$ and $\mathcal{N}(\Delta, \sigma^2)$ based on a single draw. The full knowledge of the two distributions' parameters renders both hypotheses \textit{simple}. In other words, $\mathcal{A}$ is able to compute the following log-likelihood ratio test statistic:
\begin{equation}\label{NPO_LR}
    \log \left(\frac{\text{Pr}(y \mid \mathcal{N}(\Delta, \sigma^2))}{\text{Pr}(y \mid \mathcal{N}(0, \sigma^2))} \right) = \frac{1}{2\sigma^2}\left(\vert y \vert^2 - \vert y- \Delta \vert^2\right),
\end{equation}
which depends only on known quantities. We call this type of adversary \textit{Neyman-Pearson-Optimal} (NPO) as they are able to detect the presence of the individual in question with the best possible trade-off between Type I and Type II errors, consistent with the guarantee of the Neyman-Pearson lemma \cite{neyman1933ix}. As is evident from Equation \eqref{NPO_LR}, the capabilities of an NPO adversary are independent of query dimensionality. Due to the isotropic properties of the Gaussian mechanism, the ability to form the full likelihood ratio allow $\mathcal{A}$ to \say{rotate the problem} in a way that allows them to linearly classify the output, which amounts to computing the multivariate version of the $z$-test. We remark in passing that this property forms the basis of \textit{linear discriminant analysis}, a classification technique reliant upon the aforementioned property. GDP utilises the worst-case capabilities of an NPO adversary as the basis for formulating a DP guarantee:
\begin{definition}[$\mu$-GDP, \cite{dong2021gaussian}]
A mechanism $\mathcal{M}$ preserves $\mu$-GDP if, $\forall D, D': D\simeq D'$, distinguishing between $f(D)$ and $f(D')$ based on an output of $\mathcal{M}$ is at least as hard (in terms of Type I and Type II error) as distinguishing between $\mathcal{N}(0, 1)$ and $\mathcal{N}(\mu, 1)$ with $\mu=\frac{\Delta}{\sigma}$.
\end{definition}
We stress that this guarantee is symmetric and given (1) over the randomness of $\mathcal{M}$, i.e. considering the datasets and $f$ as deterministic and (2) without consideration to (i.e. over) the adversary's prior. GDP utilises a \textit{trade-off} function/curve $T(x)$ to summarise the set of all achievable Type I and Type II errors by the adversary. An equivalent formulation is the \textit{testing region} in \cite{kairouz2015composition}. The \textit{trade-off} function is identical to the complement of the ROC curve (i.e. $T(x)=1-R(x)$), that is, the curve which plots the probability of correctly selecting $\mathcal{H}_1$ when it is true (equivalently, the probability of true positives, probability of detection ($P_d$) or sensitivity) against the probability of falsely selecting $\mathcal{H}_1$ when $\mathcal{H}_0$ is true (equivalently, the probability of false positive ($P_f$), probability of false alarm or $1-$specificity. Due to its greater familiarity, we will utilise $R(x)$ throughout as the guarantee is identical. For more details, we refer to \cite{dong2021gaussian, Kaissis_Knolle_Jungmann_Ziller_Usynin_Rueckert_2022}. 

\subsection{Our threat model relaxation}
As the introductory section above outlines, the NPO adversary is the most powerful adversary imaginable (i.e. \textit{omnipotent}). Mathematically, the \say{key to omnipotence} is the aforementioned ability to form the full likelihood ratio. This crucially relies on either knowledge of or control over the dataset and/or function. In many realistic settings however, this assumption may be too pessimistic. A few examples:
\begin{itemize}
    \item In federated learning, adversarial actors don't have access to other participants' datasets as they only witness the outputs of models trained on these datasets.
    \item In (e.g. medical) settings where data is generated and securely stored in a single institution, access to the full dataset is highly improbable.
    \item When neural networks are trained on private data with DP-SGD, the adversary has no control over the function and only receives the output of the computation.
\end{itemize}
The goal of our work is to provide a formal and tight privacy guarantee for such settings. Such a guarantee can then e.g. be parameterised by the probability that the NPO setting does not apply. For example, it can be stated that with probability $p$, the adversary has no access to the dataset while with probability $1-p$ they do, where $p$ is optimally very close to $1$. Such a \textit{flexibilisation} of the threat model allows stakeholders to make holistic and sound decisions on the choice of privacy parameters while maintaining the worst-case outlook offered by DP. Assuming the formal point-of-view once more, the key difference between the NPO setting and the examples above is that, in the latter setting, $\mathcal{A}$ is \textbf{unable to form the full likelihood ratio} as at least one parameter of the distribution under $\mathcal{H}_1$ is unknown. This renders $\mathcal{H}_1$ a \textit{composite} hypothesis. We define the threat model relaxation as follows:
\begin{definition}[Relaxed threat model]
We say that an adversary operates under a relaxed threat model if they are unable to fully specify the distributions of \emph{at least one of} $\mathcal{H}_0$ or $\mathcal{H}_1$ because some parameters are unknown to them and must thus be estimated.   
\end{definition}
To formally analyse this threat model in the rest of the paper, we will make the \textit{smallest possible} relaxation to the adversary and assume they are \textit{only lacking full knowledge of a single parameter} of the distribution of $\mathcal{H}_1$. This results in a \textit{nearly omnipotent} adversary. As is shown, even this minimal relaxation leads to a substantial amplification in privacy in certain regimes which are very relevant to everyday practice. For example, consider an adversary $\mathcal{A}_R$ who has full knowledge of $f$ and $\mathcal{M}$, but no access to $D$ or $D'$. Using this information, the adversary can infer all required information \textit{except the sign of $\Delta$}, as they now only know that $\Vert f(D) - f(D') \Vert_2 = \Delta$. Of note, we use the term \textit{sign} here to denote the multivariate \textit{signum} function $\operatorname{sgn}(x) = \frac{x}{\Vert x \Vert_2}$, i.e. the principal direction of the vector in $d$-dimensional space. More intuitively, $\mathcal{A}_R$ knows that $f(D)$ and $f(D')$ have \textit{distance} $\Delta$ from each-other, but not \textit{where} they are located with respect to one-another spatially. This situation results in the following hypothesis test (in the case of scalar $y$):
\begin{equation}
    \mathcal{H}_0: y = Z \;\; \text{vs.} \;\; \mathcal{H}_1: y = \pm \Delta + Z, Z \sim \mathcal{N}(0, \sigma^2)
\end{equation}
Evidently $\mathcal{H}_1$ is not unique and thus unrealisable. Formally, a \textit{uniformly most powerful} test does not exist to distinguish between the hypotheses. The adversary thus has to resort in a category of tests summarised under the term \textit{Generalised Likelihood Tests}, and we term such an adversary a \textit{Generalised Likelihood Test} (GLRT) adversary. For a general $\boldsymbol{y}$, the hypothesis test takes the following form:
\begin{equation}
    \mathcal{H}_0: \boldsymbol{y} \sim \mathcal{N}(\boldsymbol{0}, \sigma^2\mathbf{I}) \;\; \text{vs.} \;\; \mathcal{H}_1: \boldsymbol{y} \sim \mathcal{N}(\boldsymbol{\nu}, \sigma^2\mathbf{I}),
\end{equation}
where $\boldsymbol{\nu} \in \mathbb{R}^d$ is unknown. Alternatively, we can formulate the equivalent \textit{two-sided} test:
\begin{equation}
    \mathcal{H}_0: \boldsymbol{\nu} = \mathbf{0} \;\; \text{vs.} \;\; \mathcal{H}_1: \boldsymbol{\nu} \neq \mathbf{0}.
\end{equation}
Then, the adversary will reject $\mathcal{H}_0$ when the following ratio takes a large value:
\begin{equation}
    \frac{\text{Pr}(\boldsymbol{y} \mid \boldsymbol{\nu}^{\ast}, \mathcal{H}_1)}{\text{Pr}(\boldsymbol{y} \mid \mathcal{H}_0)},
\end{equation}
where $\boldsymbol{\nu}^{\ast}$ is the maximum likelihood estimate of $\boldsymbol{\nu}$. As seen in the proof to Theorem 1 below, in the relaxed threat model membership inference game, this will lead to classifying $\boldsymbol{y}$ by magnitude alone. We will show that this has two effects: 
\begin{enumerate}
    \item It substantially amplifies privacy for the individuals in $D$/$D'$;
    \item This amplification is moreover dependent on the dimensionality of the query with stronger privacy for higher-dimensional queries.
\end{enumerate}

\section{Results}

\subsection{ROC curve in the GLRT setting}
We are now ready to precisely describe the membership inference capabilities of the GLRT adversary in terms of their hypothesis testing prowess.

\begin{theorem}
Let $R(x)$ be the ROC curve of the following test:
\begin{equation}\label{test_1}
    \mathcal{H}_0: \boldsymbol{y} \sim \mathcal{N}(\boldsymbol{0}, \sigma^2\mathbf{I}) \;\; \text{vs.} \;\; \mathcal{H}_1: \boldsymbol{y} \sim \mathcal{N}(\boldsymbol{\nu}, \sigma^2\mathbf{I})
\end{equation}
and $R'(x)$ be the ROC curve of the following test:
\begin{equation}
    \mathcal{H}'_0: \boldsymbol{y} \sim \mathcal{N}(\boldsymbol{\nu}, \sigma^2\mathbf{I}) \;\; \text{vs.} \;\; \mathcal{H}'_1: \boldsymbol{y} \sim \mathcal{N}(\boldsymbol{0}, \sigma^2\mathbf{I}),
\end{equation}
where $y$ is the output of a Gaussian mechanism with noise variance $\sigma^2$ known to the adversary and $\boldsymbol{\nu}$ be an unknown vector of dimensionality $d$ with $\Vert \boldsymbol{\nu} \Vert_2 \leq \Delta$ for some constant $\Delta$ which is known to the adversary. Then, the following hold:
\begin{equation}\label{R}
    R(x) = \Psi_{\chi^2_d\left(\frac{\Delta^2}{\sigma^2}, \sigma^2 \right)} \left( \Psi^{-1}_{\chi^2_d\left(0, \sigma^2 \right)}(x)\right)
\end{equation}
and
\begin{equation}
    R'(x) = \Phi_{\chi^2_d\left(0, \sigma^2 \right)} \left(\Phi^{-1}_{\chi^2_d\left(\frac{\Delta^2}{\sigma^2}, \sigma^2 \right)}(x) \right).
\end{equation}
Above, $\Phi$ is the cumulative distribution function, $\Psi$ the survival function, $\Phi^{-1}$ and $\Psi^{-1}$ their respective inverses, $\chi^2_d(\lambda, \sigma^2)$ denotes the noncentral chi-squared distribution with $d$ degrees of freedom, noncentrality parameter $\lambda$ and scaling factor $\sigma^2$. Setting $\lambda=0$ recovers the central chi-squared distribution.
\end{theorem}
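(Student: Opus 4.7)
The plan is to derive the GLRT statistic explicitly, show it reduces to a monotone function of $\|\boldsymbol{y}\|_2^2$, and then read off the ROC curves from the (non-)central chi-squared distributions of $\|\boldsymbol{y}\|_2^2$ under each hypothesis. First I would form
\[
\Lambda(\boldsymbol{y}) = \frac{\sup_{\|\boldsymbol{\nu}\|_2 \le \Delta}\exp\!\bigl(-\|\boldsymbol{y}-\boldsymbol{\nu}\|_2^2/(2\sigma^2)\bigr)}{\exp\!\bigl(-\|\boldsymbol{y}\|_2^2/(2\sigma^2)\bigr)},
\]
and recognise the numerator as the likelihood evaluated at the constrained MLE $\boldsymbol{\nu}^{\ast}$. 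Since the Gaussian log-likelihood is a decreasing function of $\|\boldsymbol{y}-\boldsymbol{\nu}\|_2^2$, this MLE is the Euclidean projection of $\boldsymbol{y}$ onto the closed $\ell_2$-ball of radius $\Delta$: it equals $\boldsymbol{y}$ when $\|\boldsymbol{y}\|_2\le\Delta$ and equals $\Delta\,\boldsymbol{y}/\|\boldsymbol{y}\|_2$ otherwise. Substituting either expression yields a $\log\Lambda$ that is a monotone increasing function of $\|\boldsymbol{y}\|_2$ (equal to $\|\boldsymbol{y}\|_2^2/(2\sigma^2)$ on the interior branch and to $(2\Delta\|\boldsymbol{y}\|_2-\Delta^2)/(2\sigma^2)$ on the exterior branch). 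Consequently the GLRT is equivalent to thresholding the sufficient statistic $T(\boldsymbol{y})=\|\boldsymbol{y}\|_2^2$, rejecting $\mathcal{H}_0$ when $T(\boldsymbol{y})>\tau$.

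Next I would identify the sampling law of $T$ under each hypothesis. Under $\mathcal{H}_0$, $\boldsymbol{y}\sim\mathcal{N}(\boldsymbol{0},\sigma^2\mathbf{I})$ gives $T\sim\chi^2_d(0,\sigma^2)$, while under $\mathcal{H}_1$ the sum-of-squares representation yields the scaled noncentral law $T\sim\chi^2_d(\|\boldsymbol{\nu}\|_2^2/\sigma^2,\sigma^2)$. Because the noncentral chi-squared survival function is stochastically increasing in its noncentrality parameter, the adversary's detection probability at every fixed threshold is maximised over $\{\boldsymbol{\nu}:\|\boldsymbol{\nu}\|_2\le\Delta\}$ precisely at $\|\boldsymbol{\nu}\|_2=\Delta$. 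This is the worst case from the privacy standpoint and therefore the one the ROC curve must report. Parameterising the threshold by the false-alarm level $x$ via $x=\Psi_{\chi^2_d(0,\sigma^2)}(\tau)$ gives $\tau=\Psi^{-1}_{\chi^2_d(0,\sigma^2)}(x)$, and substituting into $P_d=\Psi_{\chi^2_d(\Delta^2/\sigma^2,\sigma^2)}(\tau)$ recovers the formula for $R(x)$.

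For $R'(x)$ the composite and simple hypotheses are swapped: now the analogous GLRT rejects $\mathcal{H}'_0$ for \emph{small} values of $\|\boldsymbol{y}\|_2^2$. The same worst-case reduction at $\|\boldsymbol{\nu}\|_2=\Delta$ applies, and the false-alarm and detection probabilities are computed through the CDF of the noncentral and central chi-squared respectively; inverting $x=\Phi_{\chi^2_d(\Delta^2/\sigma^2,\sigma^2)}(\tau')$ and composing with $\Phi_{\chi^2_d(0,\sigma^2)}$ yields the stated $R'(x)$.

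I expect the main obstacle to be the constrained-MLE step combined with the worst-case reduction: one must justify that the Euclidean projection onto the ball really is the argmax, that the two piecewise branches of $\log\Lambda$ are both monotone in the same scalar statistic so the rejection region is a single level set $\{\|\boldsymbol{y}\|_2^2>\tau\}$, and that the stochastic-dominance property of the noncentral chi-squared family does pin the worst case at $\|\boldsymbol{\nu}\|_2=\Delta$ uniformly in $\tau$. Everything downstream is a change-of-variables bookkeeping exercise with the scaled noncentral chi-squared CDF and survival function.
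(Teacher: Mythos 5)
Your proposal is correct and follows essentially the same route as the paper: reduce the GLRT to thresholding $T(\boldsymbol{y})=\lVert\boldsymbol{y}\rVert_2^2$, then read off $P_f$ and $P_d$ from the scaled central and noncentral chi-squared laws and eliminate the threshold. You are in fact slightly more careful than the paper in two places — the paper simply takes the unconstrained MLE $\boldsymbol{\nu}^{\ast}=\boldsymbol{y}$ rather than the projection onto the $\Delta$-ball, and it does not spell out the stochastic-dominance argument pinning the worst case at $\lVert\boldsymbol{\nu}\rVert_2=\Delta$ — but both refinements lead to the same rejection region and the same ROC formulas.
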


\begin{proof}
We give the proof for $R$ (Equations \eqref{test_1} and \eqref{R}), the proof for $R'$ follows from it. Since the adversary does not know the true value of $\boldsymbol{\nu}$, the full likelihood ratio cannot be formed and a Neyman-Pearson approach (i.e. $z$-test) is out of the question. Instead, classification is done by magnitude. As described above, the null hypothesis is rejected if the following ratio takes a large value:
\begin{equation}
    \frac{\text{Pr}(\boldsymbol{y} \mid \boldsymbol{\nu}^{\ast}, \mathcal{H}_1)}{\text{Pr}(\boldsymbol{y} \mid \mathcal{H}_0)}.
\end{equation}
However, the maximum likelihood estimate $\boldsymbol{\nu}^{\ast}$ is just $\boldsymbol{y}$, thus the likelihood ratio is simplified to (cancelling the normalisation factors):
\begin{equation}
    \frac{\exp\left\{-\frac{\lVert \boldsymbol{y} - \boldsymbol{\nu}\rVert_2^2}{2\sigma^2}\right\}}{\exp\left\{-\frac{\lVert \boldsymbol{y}\rVert_2^2}{2\sigma^2}\right\}} = \frac{1}{\exp\left\{-\frac{\lVert \boldsymbol{y}\rVert_2^2}{2\sigma^2}\right\}}.
\end{equation}
Taking logarithms and collecting known terms, we obtain that the test statistic is:
\begin{equation} \label{test_statistic}
    T(\boldsymbol{y}) = \lVert \boldsymbol{y} \rVert^2_2 \lessgtr c^2, 
\end{equation}
where $c^2$ is a (non-negative) cut-off value chosen to satisfy a desired level of Type I/Type II errors (or significance level and power level or $P_d$ and $P_f$).
Then, by the definition of the null and alternative hypotheses,
\begin{align}
    & P_f = \text{Pr}(\lVert \boldsymbol{y} \rVert^2 > c^2 \mid \mathcal{H}_0) \; \text{and} \\
    & P_d = \text{Pr}(\lVert \boldsymbol{y} \rVert^2 > c^2 \mid \mathcal{H}_1)
\end{align}
hold.
Under $\mathcal{H}_0$, $T(\boldsymbol{y})$ follows a central chi-squared distribution with $d$ degrees of freedom and scale $\sigma^2$, as it is the distribution of the magnitude of a draw from a $d$-dimensional multivariate Gaussian random variable with spherical covariance $\sigma^2\mathbf{I}$. Thus:
\begin{equation}
    P_f = \Psi_{\chi^2_d \left(0, \sigma^2\right)}(c^2) \Leftrightarrow c^2 = \Psi^{-1}_{\chi^2_d\left(0, \sigma^2 \right)}(P_f),
\end{equation}
Under the alternative hypothesis, the distribution is noncentral chi squared with $d$ degrees of freedom, scale $\sigma^2$ and noncentrality parameter $\frac{\Delta^2}{\sigma^2}$ as it is the distribution of the magnitude of a draw from a $d$-dimensional multivariate Gaussian random variable with spherical covariance $\sigma^2\mathbf{I}$. The ROC curve plots $P_f(c^2), P_d(c^2)$ parametrically for all values of $c^2$. Observing that in the plot, $P_f$ is the $x$-coordinate, we substitute the expression for $c^2$ from above and obtain:
\begin{equation}
    \Psi_{\chi^2_d\left(\frac{\Delta^2}{\sigma^2}, \sigma^2 \right)} \left( \Psi^{-1}_{\chi^2_d\left(0, \sigma^2 \right)}(x) \right).
\end{equation}
A concrete example for $d=1$ can be found in the proof to Corollary 1.
\end{proof}

Computing both $R(x)$ and $R'(x)$ is imposed by the requirement of DP to hold when $D$ and $D'$ are swapped. $\mathcal{H}_0$/$\mathcal{H}_1$ and $\mathcal{H}'_0$/$\mathcal{H}'_1$ are easily identified as the tests to distinguish $D \coloneqq \lbrace A \rbrace$ from $D' \coloneqq \lbrace A, B \rbrace$ and $D'$ from $D$, respectively.
In practice, we will always use the symmetrified and concavified version of the ROC curve $R_s(x)$, as prescribed in \cite{dong2021gaussian}. We refer to the aforementioned manuscript for details of the construction or $R_s(x)$. In brief, the curve is constructed by taking the concave upper envelope of $R$ and $R'$ by linearly interpolating between the points where the slope of the curves is parallel to the diagonal of the unit square. By symmetrifying the curves, it is guaranteed that the correct (worst-case) bound is used to convert to $(\varepsilon, \delta)$-DP by Legendre-Fenchel conjugation, a procedure we outline below. Examples of the curves in question are shown in Figure \ref{fig:roc_curves}.

\begin{figure}[ht]
    \centering
    \includegraphics[width=0.45\textwidth]{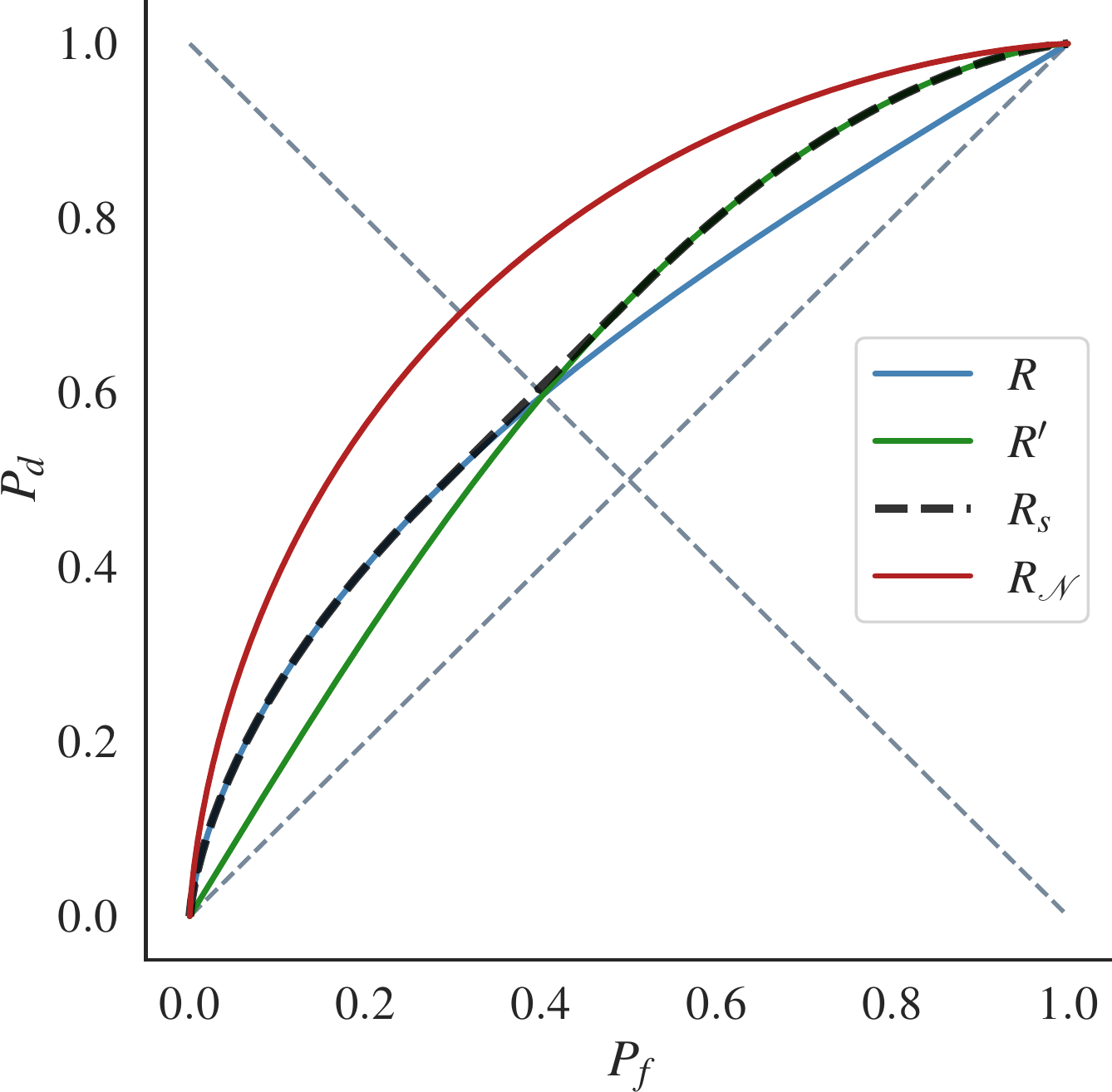}
    \caption{Exemplary visualisations of $R(x)$ (blue) and $R'(x)$ (green) as well as the concavified and symmetrified ROC curve $R_s(x)$ (dashed black) for a Gaussian mechanism with $\Delta=1, \sigma^2=1$ under the GLRT assumption. $R$ and $R'$ are symmetric with respect to reflection over the off-diagonal but not symmetric ROC curves, whereas $R_s(x)$ is a symmetric ROC curve. For reference, the ROC curve of an NPO adversary $R_{\mathcal{N}}$ at the same $\Delta$ and $\sigma^2$-values is shown in red. Observe that $R_{\mathcal{N}}$ is also a symmetric ROC curve.}
    \label{fig:roc_curves}
\end{figure}

We are particularly interested in the case where $d=1$, as it represents the worst-case scenario in terms of privacy (respectively the easiest problem for the adversary, see Section \ref{sec:dimensionality}). We emphasise that in this case, all the adversary is unaware of is the \textit{sign of the mean under the alternative hypothesis}.
\begin{corollary}
When $d=1$, $R(x)$ admits the following closed-form expression:
\begin{equation}
    Q\left(Q^{-1}\left(\frac{x}{2}\right)-\frac{\Delta}{\sigma}\right) + Q\left(Q^{-1}\left(\frac{x}{2}\right)+\frac{\Delta}{\sigma}\right),
\end{equation}
where $Q$ is the survival function of the standard normal distribution and $Q^{-1}$ its inverse. Compare the ROC curve of the NPO adversary shown in red in Figure \ref{fig:roc_curves}:
\begin{equation}\label{eq:gauss_roc}
    Q\left(Q^{-1}(x) - \frac{\Delta}{\sigma}\right).
\end{equation}
\end{corollary}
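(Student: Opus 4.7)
The plan is to specialize the proof of Theorem 1 to $d=1$ by replacing the generic chi-squared expressions with explicit normal tail probabilities, exploiting the fact that a central chi-squared with one degree of freedom is just the distribution of the square of a standard normal. Concretely, I would start from the already-derived test statistic $T(y) = y^2$, for which the rejection region has the form $\{|y| > c\}$ for some cut-off $c \geq 0$.

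Next I would compute the false-positive probability under $\mathcal{H}_0$, where $y \sim \mathcal{N}(0,\sigma^2)$. By symmetry,
\begin{equation}
P_f = \Pr\bigl(|y| > c \,\big|\, \mathcal{H}_0\bigr) = 2\,Q\!\left(\tfrac{c}{\sigma}\right),
\end{equation}
so inverting yields $c/\sigma = Q^{-1}(P_f/2)$. This is the one-dimensional analogue of inverting $\Psi_{\chi^2_1(0,\sigma^2)}$, and it is the step where the factor of $2$ enters the final formula.

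Then I would turn to the power under $\mathcal{H}_1$. Here $y \sim \mathcal{N}(\nu,\sigma^2)$ with $|\nu| \leq \Delta$; the worst case (largest $P_d$ for fixed $P_f$) is attained at $|\nu|=\Delta$, and by the symmetry of the two-sided rejection region the result is the same for $\nu = +\Delta$ and $\nu = -\Delta$, so without loss of generality take $\nu=\Delta$. Splitting $\{|y|>c\} = \{y>c\} \cup \{y<-c\}$ and standardising gives
\begin{equation}
P_d = Q\!\left(\tfrac{c-\Delta}{\sigma}\right) + Q\!\left(\tfrac{c+\Delta}{\sigma}\right),
\end{equation}
using $\Phi(-t)=Q(t)$ for the left tail.

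Finally, substituting $c/\sigma = Q^{-1}(x/2)$ (with $x = P_f$) into this expression yields
\begin{equation}
R(x) = Q\!\left(Q^{-1}\!\left(\tfrac{x}{2}\right) - \tfrac{\Delta}{\sigma}\right) + Q\!\left(Q^{-1}\!\left(\tfrac{x}{2}\right) + \tfrac{\Delta}{\sigma}\right),
\end{equation}
as claimed. I do not anticipate a genuine obstacle: the only subtle points are remembering the factor of $2$ coming from the two-sided rejection region (the $\chi^2_1$ survival function at $c^2$ equals twice the normal survival function at $c/\sigma$) and justifying that $|\nu|=\Delta$ is the worst case, both of which follow immediately from the symmetry of the test statistic $y^2$ about zero. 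Comparison with the NPO curve \eqref{eq:gauss_roc} then amounts to noting that the extra tail term $Q(Q^{-1}(x/2)+\Delta/\sigma)$ is small for reasonable $\Delta/\sigma$, yet the argument $Q^{-1}(x/2)$ (rather than $Q^{-1}(x)$) already shifts the dominant term toward the diagonal, quantifying the privacy amplification visible in Figure \ref{fig:roc_curves}.
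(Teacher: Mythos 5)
Your derivation is correct, but it reaches the result by a genuinely different route than the paper. The paper stays inside the chi-squared framework of Theorem~1: it evaluates the survival function of the scaled central $\chi^2_1$ distribution at $c^2$ via the lower incomplete gamma function to get $\operatorname{erfc}(c/(\sqrt{2}\sigma)) = 2Q(c/\sigma)$, and evaluates the noncentral $\chi^2_1$ survival function as the Marcum Q-function of order $\tfrac{1}{2}$, whose closed form $\mathsf{Q}_{M\frac{1}{2}}(a,b) = Q(b-a)+Q(a+b)$ then yields the two-term expression. You instead never leave the Gaussian domain: you observe that $T(y)=y^2>c^2$ is the two-sided rejection region $\{|y|>c\}$, compute $P_f = 2Q(c/\sigma)$ by symmetry of the null, and split the rejection region into two tails under $\mathcal{H}_1$ to get $P_d = Q((c-\Delta)/\sigma)+Q((c+\Delta)/\sigma)$. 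The two computations are of course equivalent (your tail-splitting is exactly what the Marcum Q-function identity encodes), but yours is more elementary and makes the origin of both the factor of $2$ and the second $Q$-term transparent, whereas the paper's version makes explicit that the scalar case is literally the $d=1$ instance of the general noncentral chi-squared machinery. One point you handle that the paper glosses over: the theorem is stated with $\Vert\boldsymbol{\nu}\Vert_2 \leq \Delta$, and you correctly note that $|\nu|=\Delta$ is the worst case; strictly this needs the monotonicity of $P_d$ in $|\nu|$ for fixed $c$ (immediate from $\phi((c-\nu)/\sigma)\geq\phi((c+\nu)/\sigma)$ for $c,\nu\geq 0$, or from monotonicity of the noncentral chi-squared survival function in its noncentrality parameter), not symmetry alone, but the claim is true and the gap is cosmetic.
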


\begin{proof}
This is a special case of the proof to Theorem 1 above, but now for scalar $y$ since $d=1$. Under the null hypothesis, the survival function of the central chi-squared distribution with one degree of freedom admits an analytical form:
\begin{align} \label{erfc}
    1-\frac{\gamma\left( \frac{1}{2}, \frac{c^2}{2{\sigma^2}} \right)}{\Gamma\left( \frac{1}{2}\right)} = &1-\frac{\sqrt{\pi}\operatorname{erf}\left(\sqrt{\frac{c^2}{2{\sigma^2}}}\right)}{\sqrt{\pi}} = \\ = \,
    &\operatorname{erfc}\left(\frac{c}{\sqrt{2}\sigma}\right),
\end{align}
where $\gamma$ is the lower incomplete gamma function, $\Gamma$ the gamma function, we inserted $\frac{c^2}{\sigma^2}$ to account for the scale and $\operatorname{erf}, \operatorname{erfc}$ are the error function and complementary error function of the Gaussian distribution, respectively. We can now exploit the following pattern:
\begin{equation}
    Q(k) = \frac{1}{2}\operatorname{erfc}\left(\frac{k}{\sqrt{2}}\right),
\end{equation}
so the term in Equation \eqref{erfc} can be written as $2Q\left(\frac{c}{\sigma}\right)$. Since the $Q$ function is invertible, we have that $c=\sigma Q^{-1}\left(\frac{x}{2}\right)$.
Similarly, for the alternative hypothesis, we have:
\begin{align}
    &\mathsf{Q}_{M\frac{1}{2}}\left (\sqrt{\frac{\Delta^2}{\sigma^2}}, \sqrt{\frac{c^2}{\sigma^2}} \right) = \\ = \,
    & \mathsf{Q}_{M\frac{1}{2}}\left (\frac{\Delta}{\sigma}, \frac{c}{\sigma} \right),
\end{align}
where $\mathsf{Q}_{M\frac{1}{2}}$ is the Marcum Q-function of order $\frac{1}{2}$, which is the survival function of the noncentral chi-squared distribution with one degree of freedom, and we have substituted $\Delta^2$ as the noncentrality to account for the mean under the alternative hypothesis and divided by $\sigma^2$ to account for the scaling. Substituting the expression for $c$ from above, we obtain:
\begin{align}
    &\mathsf{Q}_{M\frac{1}{2}}\left (\frac{\Delta}{\sigma}, \frac{\sigma Q^{-1}\left(\frac{x}{2}\right)}{\sigma} \right) = \\ = \,
    &\mathsf{Q}_{M\frac{1}{2}}\left (\frac{\Delta}{\sigma}, Q^{-1}\left(\frac{x}{2}\right) \right). \label{marcum}
\end{align}
The Marcum Q-function of order $\frac{1}{2}$ also admits a closed form:
\begin{equation}\label{marcum_2}
    \mathsf{Q}_{M\frac{1}{2}}(a,b) = \frac{1}{2}\left(\operatorname{erfc}\left(\frac{b-a}{\sqrt{2}} \right) + \operatorname{erfc}\left(\frac{b+a}{\sqrt{2}} \right) \right).
\end{equation}
Using the pattern $\frac{1}{2}\operatorname{erfc}\left(\frac{k}{\sqrt{2}}\right)$ as $Q(k)$, we rewrite Equation \eqref{marcum_2} as $\mathsf{Q}_{M\frac{1}{2}}(a,b) = Q(b-a) + Q(a+b)$. Finally, we substitute the arguments from Equation \ref{marcum} and obtain:
\begin{equation}
    Q\left(Q^{-1}\left(\frac{x}{2}\right) - \frac{\Delta}{\sigma}\right) + Q\left(Q^{-1}\left(\frac{x}{2}\right) + \frac{\Delta}{\sigma}\right),
\end{equation}
which completes the proof. The proof of Equation \eqref{eq:gauss_roc} can be found in \cite{dong2021gaussian} or \cite{Kaissis_Knolle_Jungmann_Ziller_Usynin_Rueckert_2022} and follows by standard properties of the Gaussian survival and cumulative distribution functions.
\end{proof}

Unfortunately, $R'(x)$ has no closed-form expression for any $d$, as the noncentral chi-squared distribution cannot --in general-- be inverted analytically. However, it is easy to invert numerically as it is monotonic and routines to evaluate the inverse to high precision are available in all standard numerical software libraries and we describe a differentiable implementation below. Observe also that the ROC curve (for a given $d$) depend only on the ratio of $\Delta$ to $\sigma$. This \say{signal to noise}(SNR)-type argument is also made by \cite{Kaissis_Knolle_Jungmann_Ziller_Usynin_Rueckert_2022} for the NPO adversary.

\subsection{Composition}
The results presented above bound the capability of the GLRT adversary in the setting of a single query release. We are now interested in extending the guarantee to the composition setting, where the adversary receives the results of $N$ queries. We stress that we adopt a worst-case outlook on composition, i.e. we assume that in the setting of $N$-fold composition, the adversary incorporates all knowledge gained from queries $1, \dots, N-1$ to improve their probability of success at the membership inference game. This interpretation is consistent with the DP threat model.

\begin{theorem}
Let $\mathcal{M}$ be a Gaussian mechanism on a function with sensitivity $\Delta$, noise variance $\sigma^2$ and output dimensionality $d$. Then, under $N$-fold homogeneous composition, the ROC curves are given by:
\begin{equation}
    R(x)^{\otimes N} = \Psi_{\chi^2_d(\lambda_{\text{comp}}, \sigma^2_{\text{comp}})} \left( \Psi^{-1}_{\chi^2_d\left(0, \sigma^2_{\text{comp}} \right)}(x)\right)
\end{equation}
and
\begin{equation}
    R'(x)^{\otimes N} = \Phi_{\chi^2_d\left(0, \sigma^2_{\text{comp}} \right)} \left(\Phi^{-1}_{\chi^2_d\left(\lambda_{\text{comp}}, \sigma^2_{\text{comp}} \right)}(x) \right).
\end{equation}
with 
\begin{equation}
    \lambda_{\text{comp}\,\otimes N} = \frac{N\Delta^2}{\sigma^2} \;\; \text{and} \;\; \sigma^2_{\text{comp}\,\otimes N} = \frac{\sigma^2}{N}.
\end{equation}
Here, $\otimes N$ denotes $N$-fold composition.
\end{theorem}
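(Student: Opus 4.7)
The plan is to reduce $N$-fold homogeneous composition to a single effective instance of Theorem 1 by invoking the sufficiency of the sample mean for a Gaussian location family. Under homogeneous composition the adversary observes $\boldsymbol{y}_1, \ldots, \boldsymbol{y}_N$, independently drawn from $\mathcal{N}(\mathbf{0}, \sigma^2 \mathbf{I})$ under $\mathcal{H}_0$ or from $\mathcal{N}(\boldsymbol{\nu}, \sigma^2 \mathbf{I})$ under $\mathcal{H}_1$, with the same unknown $\boldsymbol{\nu}$ satisfying $\lVert \boldsymbol{\nu} \rVert_2 \leq \Delta$ across all queries. Writing the joint log-likelihood ratio and collecting terms shows that it depends on the observations only through $\bar{\boldsymbol{y}} \coloneqq \frac{1}{N}\sum_{i=1}^N \boldsymbol{y}_i$, and the joint MLE of $\boldsymbol{\nu}$ is $\boldsymbol{\nu}^{\ast} = \bar{\boldsymbol{y}}$. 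Substituting into the GLRT in exactly the manner of the proof of Theorem 1 yields the test statistic $T = \lVert \bar{\boldsymbol{y}} \rVert_2^2$ up to known constants.

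Next I would identify the induced single-query problem. Under $\mathcal{H}_0$ one has $\bar{\boldsymbol{y}} \sim \mathcal{N}(\mathbf{0}, \frac{\sigma^2}{N} \mathbf{I})$, and under $\mathcal{H}_1$ the worst-case scenario $\lVert \boldsymbol{\nu} \rVert_2 = \Delta$ gives $\bar{\boldsymbol{y}} \sim \mathcal{N}(\boldsymbol{\nu}, \frac{\sigma^2}{N} \mathbf{I})$ with $\lVert \boldsymbol{\nu} \rVert_2 = \Delta$. This is precisely the setup of Theorem 1 with effective noise variance $\sigma^2_{\text{comp}} = \sigma^2/N$ and the same sensitivity $\Delta$. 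Applying Theorem 1 directly, $T$ is distributed as $\chi^2_d(0, \sigma^2_{\text{comp}})$ under $\mathcal{H}_0$ and as $\chi^2_d(\lambda_{\text{comp}}, \sigma^2_{\text{comp}})$ under $\mathcal{H}_1$, where the noncentrality computes to $\lambda_{\text{comp}} = \Delta^2/\sigma^2_{\text{comp}} = N\Delta^2/\sigma^2$. The formulas for $R(x)^{\otimes N}$ and $R'(x)^{\otimes N}$ then follow immediately by substitution, and the proof for $R'$ mirrors that of $R$ exactly as in Theorem 1.

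The main obstacle is justifying that performing the GLRT on the sufficient statistic $\bar{\boldsymbol{y}}$ is indeed the adversary's best strategy against $N$-fold composition, so that the bound obtained is tight rather than merely an upper bound on privacy loss. This can be argued from two directions: first, $\bar{\boldsymbol{y}}$ is a minimal sufficient statistic for $\boldsymbol{\nu}$ in the Gaussian location family, so any decision rule based on the full tuple $(\boldsymbol{y}_1, \ldots, \boldsymbol{y}_N)$ factors through $\bar{\boldsymbol{y}}$ without information loss; second, any adaptive or post-processed rule can only weaken the adversary by the data-processing inequality for trade-off functions stated in \cite{dong2021gaussian}. A subtlety worth flagging is that the constraint $\lVert \boldsymbol{\nu} \rVert_2 \leq \Delta$ is attained at the boundary as the worst case, consistent with the DP convention of taking the supremum over adjacent datasets, which is what makes the substitution $\lambda_{\text{comp}} = N\Delta^2/\sigma^2$ correct rather than merely an upper bound.
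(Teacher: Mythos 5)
Your proposal is correct and follows essentially the same route as the paper: reduce the $N$-fold composition to a single query by averaging the observations, obtain $\bar{\boldsymbol{y}} \sim \mathcal{N}(\cdot, \frac{\sigma^2}{N}\mathbf{I})$, and substitute $\sigma^2 \leftarrow \frac{\sigma^2}{N}$ into Theorem 1, which yields $\lambda_{\text{comp}} = N\Delta^2/\sigma^2$. You are somewhat more careful than the paper, which simply asserts that the test statistic "becomes" $\lVert \frac{1}{N}\sum_i \boldsymbol{y}_i \rVert_2^2$, whereas you derive this from the joint GLRT via sufficiency of the sample mean and additionally flag why the reduction loses no adversarial power — worthwhile details, but not a different argument.
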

\begin{proof}
Since we allow the adversary to collect all $N$ samples before having to commit to one of the two hypotheses, they can exploit the isotropic property of Gaussian noise to \say{average out} the noise, so that the test statistic in Equation \eqref{test_statistic} becomes:
\begin{equation}
    \left \Vert \frac{1}{N} \sum_{i=1}^N \boldsymbol{y}_i \right \Vert_2^2 \lessgtr c^2,
\end{equation}
This transforms the distributions of the hypotheses to:
\begin{align}
    &\mathcal{H}_0: \boldsymbol{y} \sim \mathcal{N}\left(\boldsymbol{0}, \frac{\sigma^2}{N}\mathbf{I}\right) \;\; \text{vs.} \\ &\mathcal{H}_1: \boldsymbol{y} \sim \mathcal{N}\left(\boldsymbol{\nu}, \frac{\sigma^2}{N}\mathbf{I}\right).
\end{align}
The claim then follows by substituting $\sigma^2 \leftarrow \frac{\sigma^2}{N}$ in Equation \eqref{R}. 
\end{proof}

This result is an effect of the fact that under composition, Gaussian mechanisms remain Gaussian and (due to the independence and identical distribution), their magnitudes converge exactly to chi-squared distributions. In the setting of heterogeneous query dimensionality, sensitivity or noise, either the ROC or the privacy profile (see below) can be composed numerically using the characteristic function representation of \cite{zhu2022optimal} (i.e. either the Fourier integral or the Fourier transform, respectively), including subsampling amplification. In the setting of high-dimensional queries composed homogeneously over many rounds, such as in deep learning applications, one can leverage the asymptotic results shown below.

\subsection{Query dimensionality} \label{sec:dimensionality}
Interestingly, in the GLRT setting, query dimensionality has an adverse effect on classification capability (contrary to the NPO setting): It is substantially harder to detect the presence of an individual in a high-dimensional query output compared to a low-dimensional one (assuming the sensitivity and noise are identical). Formally, this can easily be verified by observing the monotonicity of the noncentral chi-squared cumulative distribution function/ quantile function under an increase in degrees of freedom (at a fixed noncentrality). 

Intuitively, the phenomenon can be understood by the effect of dimensionality on vector magnitude. In the setting of low query sensitivity, high noise and few observations, (that is to say, the setting we are interested in in privacy), the noise in high-dimensional space dominates the signal originating from the spatial separation of the vectors \cite{Urkowitz}. This phenomenon is equivalent to the \say{curse of dimensionality} observed in other fields of statistics and machine learning. Thus, an adversary who is in control of the function or can influence the training process will want to minimise the number of free parameters to detect the change in the output induced by the presence of a single individual with the highest possible sensitivity. This is e.g. the strategy employed by \cite{nasr2021adversary} and is the reason we focus on $d=1$ as the worst-case ROC curve above. Examples for the effect of dimensionality can be found in Figures \ref{fig:dof} and \ref{fig:dimension_epsilon} and in the experimental section below.

\begin{figure}[h]
    \centering
    \includegraphics[width=0.45\textwidth]{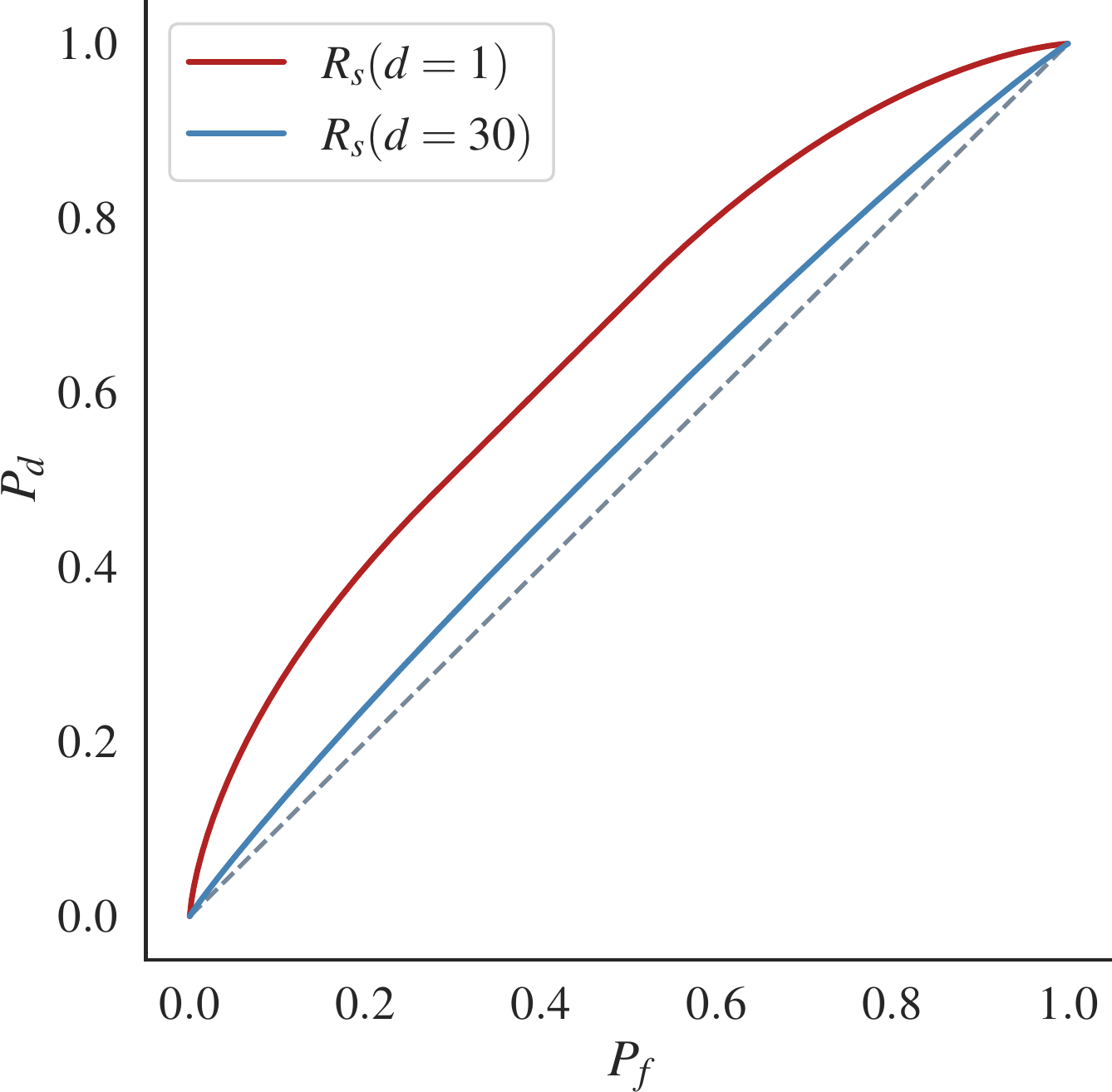}
    \caption{Effect on query dimensionality on the ROC curve. At a fixed $\Delta = \sigma^2 = 1$, the adversary classifying a query of dimensionality $d=1$ (red curve, $(\varepsilon=3.11, \delta=10^{-4})$-DP) will have substantially greater success than when the query has dimensionality $d=30$ (blue curve, $(0.46, 10^{-4})$-DP).}
    \label{fig:dof}
\end{figure}

\begin{figure}[h]
    \centering
    \includegraphics[width=0.45\textwidth]{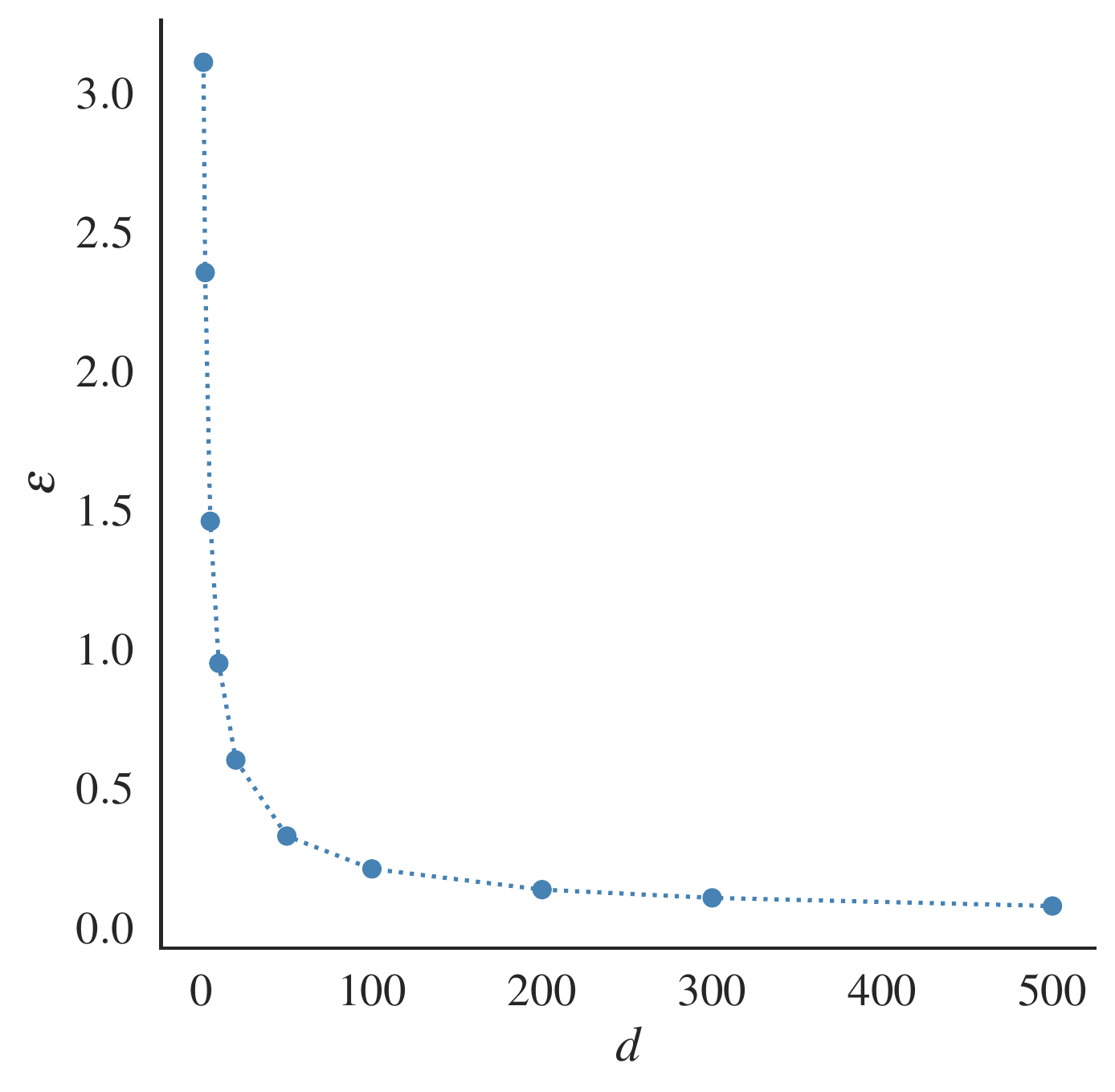}
    \caption{Query dimensionality plotted against the effective $\varepsilon$ value at $\delta=10^{-1}$ for a GLRT adversary. $\Delta=\sigma^2=1$, $d \in [1, 2, 5, 10, 20, 50, 100, 200, 300, 500]$. A marked decrease in effective $\varepsilon$ (conversion discussed in Section \ref{sec:eps_delta_conv}) is observed with increasing dimensionality.}
    \label{fig:dimension_epsilon}
\end{figure}

\subsection{Asymptotics}
To formalise the aforementioned result and foreshadow the upcoming findings that the GLRT relaxation is especially beneficial when the query dimensionality is large, the query sensitivity is low and noise magnitudes are high, we now consider the asymptotic behaviour of the ROC curves in this setting. Our results harken back to the observations of \cite{dong2021gaussian}, who --stated informally-- find that \say{very private mechanisms composed over many rounds asymptotically converge to GDP}. In practice, this means that, when the ROC curve is \say{very close} to the diagonal, i.e. for mechanisms with strong privacy guarantees, the resulting ROC curve under many rounds of composition should approach some version of the NPO Gaussian mechanism ROC curve shown in Equation \eqref{eq:gauss_roc}. Continuing the \say{signal-to-noise} argument from above: when many samples of a low SNR process are observed, the central limit theorem (CLT) applies and the system can be described well in terms of Gaussian distributions. We have the following formal result:
\begin{theorem}
Let $\mathcal{M}$ be a Gaussian mechanism on a function with sensitivity $\Delta$, noise variance $\sigma^2$ and output dimensionality $d \gg 1$ such that $\frac{\Delta}{\sigma} \ll 1$. As $N \rightarrow \infty$, the ROC curves of $\mathcal{M}$ under the GLRT assumption converge to:
\begin{align}
    &Q\left(\frac{1}{\sqrt{\frac{2 \Delta^{2} N}{d \sigma^{2}} + 1}}Q^{-1}(x) - \frac{\sqrt{2} \Delta^{2} N}{2 d \sigma^{2} \sqrt{\frac{2 \Delta^{2} N}{d \sigma^{2}} + 1}} \right) \label{eq:clt_1}
    \\ 
    &\approx \; Q\left(Q^{-1}(x) - \frac{N\Delta^2}{\sigma^2}\sqrt{\frac{1}{2d}} \right). \label{eq:clt_2}
\end{align}
The similarity of Equation \eqref{eq:clt_2} to Equation \eqref{eq:gauss_roc} thus leads us to conclude that $\mathcal{M}$ converges to $\mu$-GDP mechanism with:
\begin{equation}
    \mu = \frac{N\Delta^2}{\sigma^2}\sqrt{\frac{1}{2d}} \label{eq:mu}
\end{equation}
\end{theorem}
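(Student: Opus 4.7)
The strategy is to apply a central-limit-type Gaussian approximation to the noncentral chi-squared distributions of the composed test statistic given by Theorem 2, then read off the resulting ROC curve for the corresponding two-Gaussian hypothesis test with unequal variances, and finally simplify in the low-SNR regime to match the canonical GDP form.

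First, by Theorem 2 the test statistic $T = \Vert \bar{\boldsymbol{y}} \Vert_2^2$ under $N$-fold composition follows $\chi^2_d(0, \sigma^2/N)$ under $\mathcal{H}_0$ and $\chi^2_d(N\Delta^2/\sigma^2, \sigma^2/N)$ under $\mathcal{H}_1$. Computing moments directly gives $(m_0, v_0) = (\sigma^2 d/N,\, 2\sigma^4 d/N^2)$ and $(m_1, v_1) = (\sigma^2 d/N + \Delta^2,\, 2\sigma^4 d/N^2 + 4\sigma^2\Delta^2/N)$. Since each statistic is a sum of $d$ independent (possibly translated) squared Gaussians, the classical CLT for such sums gives, for $d \gg 1$, the approximations $T \mid \mathcal{H}_i \approx \mathcal{N}(m_i, v_i)$.

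Second, the ROC of a threshold test between two Gaussians is standard: inverting $P_f = Q((c^2 - m_0)/\sqrt{v_0}) = x$ gives $c^2 = m_0 + \sqrt{v_0}\, Q^{-1}(x)$, and substituting into $P_d = Q((c^2 - m_1)/\sqrt{v_1})$ yields
\begin{equation*}
R(x) \approx Q\!\left(\sqrt{v_0/v_1}\, Q^{-1}(x) - (m_1 - m_0)/\sqrt{v_1}\right).
\end{equation*}
Plugging in the moments and using the identity $v_1/v_0 = 1 + 2N\Delta^2/(d\sigma^2)$ reproduces Equation \eqref{eq:clt_1}. The low-SNR hypothesis $\Delta/\sigma \ll 1$ then forces $2N\Delta^2/(d\sigma^2) \ll 1$, so $\sqrt{v_0/v_1} \to 1$; this reduces the bound to Equation \eqref{eq:clt_2}, and matching against the canonical GDP ROC $Q(Q^{-1}(x) - \mu)$ reads off $\mu = N\Delta^2/(\sigma^2 \sqrt{2d})$, which is Equation \eqref{eq:mu}.

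The main obstacle is making the Gaussian approximation \emph{uniform in} $x$: both $\Psi_{\chi^2_d}$ and its inverse $\Psi^{-1}_{\chi^2_d}$ must converge to their Gaussian counterparts in order for their composition in $R(x)$ to converge pointwise, especially near the endpoints $x \to 0, 1$ where chi-squared and Gaussian tails diverge most. A Berry-Esseen-style bound on the Kolmogorov distance between the (noncentral) chi-squared law and its Gaussian approximation -- of order $O(1/\sqrt{d})$ in this regime -- is what controls this uniformity and justifies passing the large-$N$, large-$d$ limits inside $\Psi^{-1}$. Once such a quantitative CLT is in place, the remaining passage to Equation \eqref{eq:mu} is purely algebraic.
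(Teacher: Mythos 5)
Your proposal is correct and follows essentially the same route as the paper: approximate the central and noncentral chi-squared laws of the composed test statistic by Gaussians matched to their first two moments, apply the standard unequal-variance Gaussian ROC formula to obtain Equation \eqref{eq:clt_1}, and drop the variance mismatch in the regime $N\Delta^2/(d\sigma^2)\ll 1$ to read off Equation \eqref{eq:clt_2} and the GDP parameter $\mu$. Your closing remark on Berry--Esseen uniformity even mirrors the paper's own post-proof comment on finite-sample convergence.
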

\begin{proof}
We will use the facts \cite{johnson1995continuous} that the mean of the central chi-squared distribution with $d$ degrees of freedom is $d$ and its variance is $2d$. The noncentral chi-squared distribution with $d$ degrees of freedom and noncentrality $\lambda$ has mean $d+\lambda$ and variance $2d+4\lambda$. Under the CLT, we thus have convergence in distribution as follows. Letting $\frac{\sigma^2}{N} \coloneqq \beta$:
\begin{align}
    &\chi^2_d(0, \beta) \rightarrow \mathcal{N}(\beta d, 2\beta^2d) \; \text{and} \\ 
    &\chi^2_d(\lambda, \beta) \rightarrow \mathcal{N}(\beta(d+\lambda), \beta^2(2d+4\lambda)).
\end{align}
We can thus use Equation \eqref{eq:gauss_roc} to derive the $P_f$ and $P_d$ similar to Theorem 1. We have:
\begin{equation}
    P_f = Q\left(\frac{c-\beta d}{\beta \sqrt{2d}} \right) \Rightarrow c = Q^{-1}(x) \beta \sqrt{2d} + \beta d
\end{equation}
and
\begin{equation}
    P_d = Q\left(\frac{c - \beta(d+\lambda)}{\beta \sqrt{2d+4\lambda}} \right).
\end{equation}
For the ROC curve, we thus obtain:
\begin{align}
    R(x) = &Q\left(\frac{Q^{-1}(x) \beta \sqrt{2d} + \beta d - \beta d - \beta \lambda}{\beta \sqrt{2d+4\lambda}} \right) = \\ = \,
    &Q\left( \frac{Q^{-1}(x)\sqrt{2d}-\lambda}{\beta \sqrt{2d+4\lambda}} \right) = \\ = \,
    &Q\left( \frac{Q^{-1}(x)-\sqrt{\frac{d}{2}}\frac{\lambda}{d}}{\sqrt{1+\frac{2\lambda}{d}}} \right).\label{fourty_two}
\end{align}
Substituting $\lambda \leftarrow \frac{N\Delta^2}{\sigma^2}$ and separating the terms, we obtain:
\begin{equation}
    Q\left(\frac{1}{\sqrt{\frac{2 \Delta^{2} N}{d \sigma^{2}} + 1}}Q^{-1}(x) - \frac{\sqrt{2} \Delta^{2} N}{2 d \sigma^{2} \sqrt{\frac{2 \Delta^{2} N}{d \sigma^{2}} + 1}} \right),
\end{equation}
which is the desired result in Equation \eqref{eq:clt_1}. To obtain Equation \eqref{eq:clt_2}, we further massage Equation \eqref{fourty_two}. Concretely, we let $\frac{\lambda}{d} \coloneqq \psi$ and Taylor expand the equation around $\psi=0$ to obtain:
\begin{equation}
    Q\left( Q^{-1}(x) - \psi \left( \sqrt{\frac{d}{2}} + Q^{-1}(x) \right) + \mathcal{O}(\psi^2) \right).
\end{equation}
When $d$ is large, $\mathcal{O}(\psi^2)$ vanishes and the $\sqrt{\frac{d}{2}}$ dominates the term in the parentheses, yielding:
\begin{equation}
    Q\left( Q^{-1}(x) - \psi \sqrt{\frac{d}{2}} \right).
\end{equation}
Finally, substituting $\psi \leftarrow \frac{\lambda}{d}$ and $\lambda \leftarrow \frac{N\Delta^2}{\sigma^2}$, we get:
\begin{equation}
    Q\left(Q^{-1}(x) - \frac{N\Delta^2}{\sigma^2}\sqrt{\frac{1}{2d}} \right),
\end{equation}
which is Equation \eqref{eq:clt_2}.
\end{proof}

Figure \ref{fig:clt} shows the accuracy of this result, which is in no small part due to the fact that both the central and the noncentral chi-squared distributions converge \textit{exactly} to Gaussian distributions when their degrees of freedom are high. Compared to Figure \ref{fig:roc_curves}, we also observe that the symmetrisation and concavification of $R$ and $R'$ are superfluous in this setting, as the curves become symmetric.

We remark that --although the CLT is a legitimate method to analyse composition-- the results are only valid asymptotically (i.e. in the infinite sample regime). A significantly faster (finite sample) convergence can be shown using the Berry-Eseen-theorem and the Edgeworth approximation implemented in \cite{wang2022analytical}, which can be used for composition. Moreover, we see that $\mu$ in Equation \eqref{eq:mu} is quadratic in the SNR (i.e. in $\frac{\Delta}{\sigma}$) and inversely proportional to query dimensionality, which explains why the GLRT relaxation is especially powerful in the high-privacy and high-dimensional regime (such as deep learning).

\begin{figure}[h]
    \centering
    \includegraphics[width=0.45\textwidth]{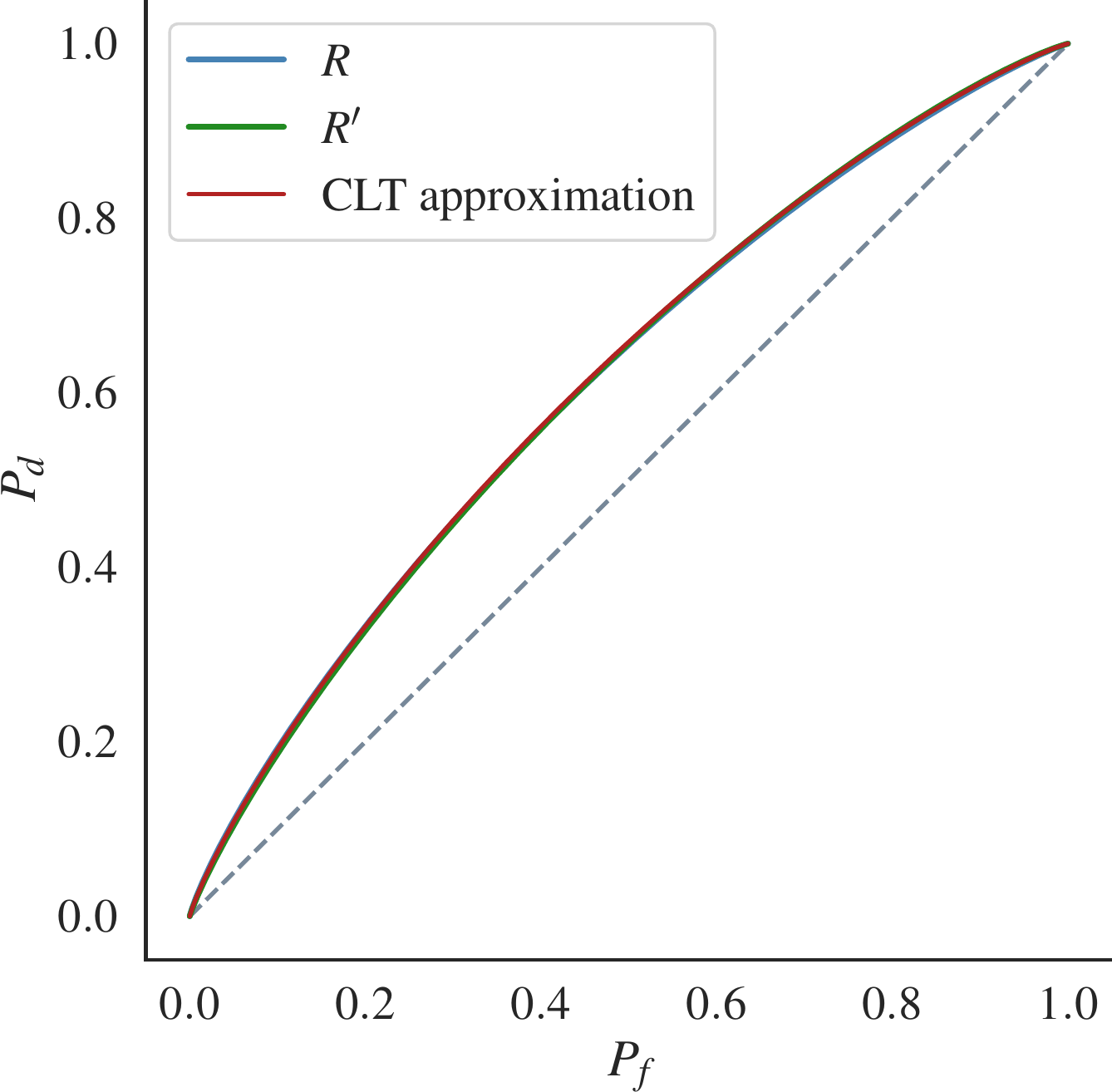}
    \caption{Convergence of $R$ (blue curve) and $R'$ (green curve) to the ROC curve of a $\mu$-GDP mechanism as shown in Equation \eqref{eq:clt_2}. In this case, $\Delta=1, \sigma^2=100, d=300$ and $N=1000$.}
    \label{fig:clt}
\end{figure}

\subsection{Conversion to $(\varepsilon, \delta)$-DP}\label{sec:eps_delta_conv}
As seen in Figure \ref{fig:roc_curves}, the GLRT adversary has diminished membership inference capabilities compared to the NPO adversary. The GDP framework has the benefit of encapsulating \textit{all} $P_d$/$P_f$ in one curve, but practitioners are often interested in the $(\varepsilon, \delta)$-guarantee, which --for many-- is considered \say{canonical}. Deriving this guarantee directly in the GLRT setting is difficult, as the specification of the privacy loss random variable involves terms which are not analytically tractable. Instead, we aim to exploit the lossless conversion between the ROC curve and the privacy profile, that is, an infinite collection of $(\varepsilon, \delta(\varepsilon))$-guarantees \cite{balle2020privacy}, which can be realised through Legendre-Fenchel duality (i.e. the concave conjugate (dual) of the ROC curve (or convex conjugate of the trade-off curve)). Our strategy is as follows:
\begin{enumerate}
    \item For a given $\Delta$ and $\sigma$, instantiate $R(x)$ and $R'(x)$;
    \item Compute the concavification and symmetrisation of the curves $R_s(x)$ as in \cite{dong2021gaussian};
    \item Compute the Legendre-Fenchel conjugate $R^{\ast}(x)$;
    \item Perform a change of variables from $(x, y(x))$ to $(\varepsilon, \delta(\varepsilon))$.
\end{enumerate}

The construction of $R^{\ast}(x)$ has to be performed numerically and has a success guarantee as the curve is monotonic. However, the \say{interesting} parts of the curve are the extreme locations where $P_d$ or $P_f$ are very small/large and thus the slope is very steep or near zero. These correspond to the low values of $\delta$ required for real-life applications. Since we found the technique to obtain the conjugate used in the \textrm{autodp} software package by \cite{zhu2022optimal} to be slow and sometimes numerically unstable, we implement the conjugate using double floating point precision central-difference numerical differentiation. We programmatically transform a subroutine $p$, which computes the value of a probability function of interest (in our case, $\Psi_{\chi^2}$ and $\Phi_{\chi^2}$) into a new subroutine $p'$ which computes the value of the derivative at the same point. Similarly, we derive $\Psi^{-1}_{\chi^2}$ and $\Phi^{-1}_{\chi^2}$. Using the aforementioned technique, we propose a refined procedure to compute $R^{\ast}(x)$. Our method is exemplified in Algorithm \eqref{alg:our_inversion}.

\begin{algorithm}[h]
\begin{algorithmic}
\Require Numerical ROC curve subroutines $R, R'$, symmetrisation and concavification subroutine \textbf{symm} described in \cite{dong2021gaussian}, numerical differentiation subroutine $\nabla$, root finding subroutine \textbf{root}, desired value of privacy parameter $\varepsilon$.
\Procedure{symm\_r}{$R, R'$}\Comment{Compute $R_s$ from $R, R'$}
    \State $R_s \gets \textbf{symm}(R, R')$ 
    \State \textbf{return} subroutine $R_s$
\EndProcedure

\Procedure{diff\_r}{$R_s$}\Comment{Compute $\mathcal{D}R_s$ from $R_s$}
    \State $\mathcal{D}R_s \gets \nabla R_s$
    \State \textbf{return} subroutine $\mathcal{D}R_s$
\EndProcedure

\Procedure{delta\_epsilon\_r}{$\varepsilon$}
    \State $m \gets e^{\varepsilon}$ 
    \State $x \gets$ \textbf{root} $R_s(x)=m$
    \State $y \gets \textbf{call} \; R_s(x)$ 
    \State $b \gets \textbf{call} \; \mathcal{D}R_s(x)$
    \State $\delta \gets -x*b+y$
    \State \textbf{return} $\delta$
\EndProcedure
\end{algorithmic}
\caption{Legendre-Fenchel conjugate computation using derivatives.}
\label{alg:our_inversion}
\end{algorithm}

The resulting privacy profile for the GLRT and NPO adversaries is exemplified in Figure \ref{fig:priv_profile}. Expectedly, the GLRT relaxation leads to a considerably lower value of $\delta$ for a given $\varepsilon$, especially in the high privacy regime $\varepsilon < 2$.  

\begin{figure}[h]
    \centering
    \includegraphics[width=0.45\textwidth]{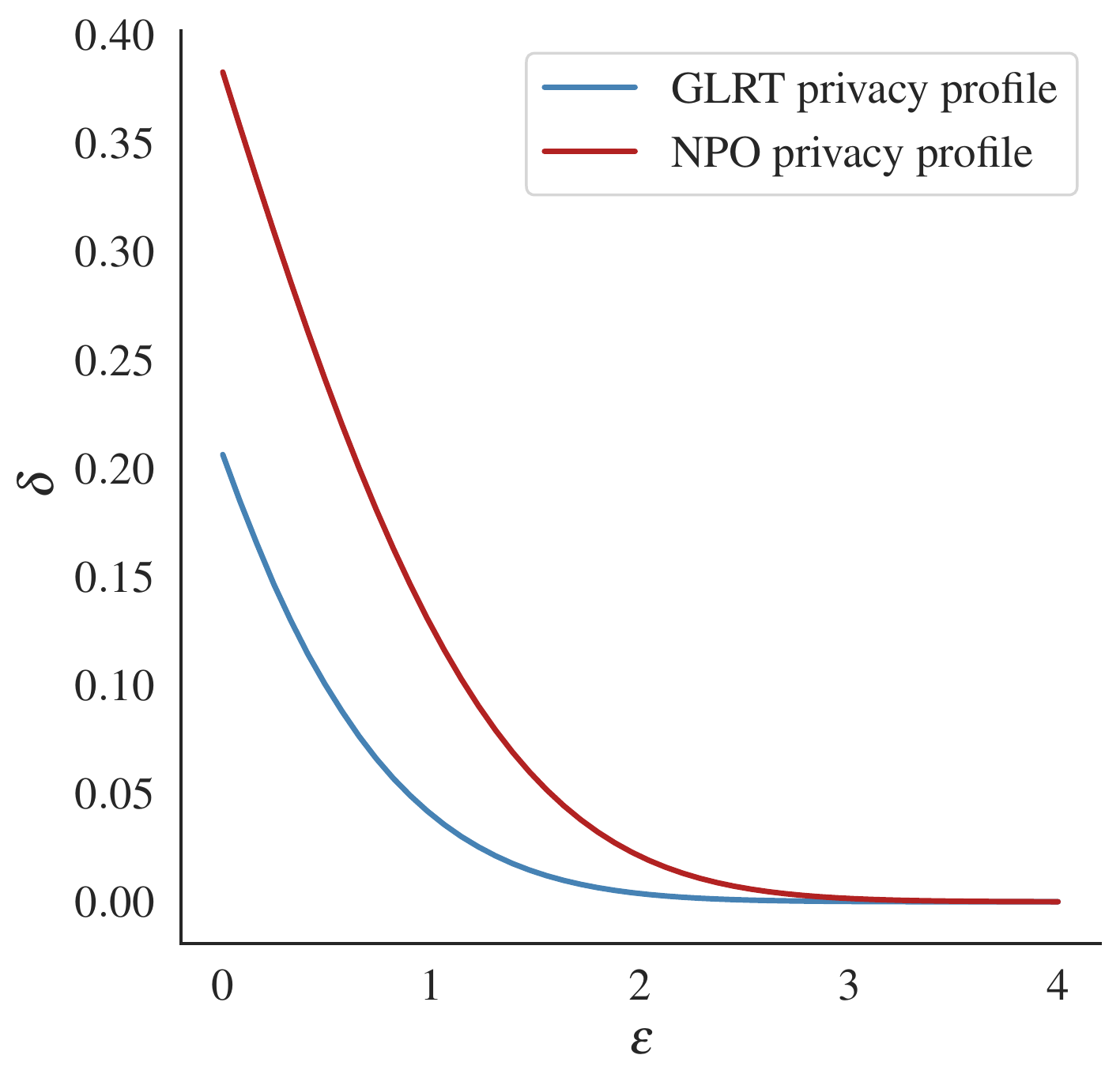}
    \caption{Privacy profile curves for the NPO (red curve) and the GLRT (blue curve) adversaries. $\Delta = \sigma^2 = 1$. A substantial reduction is observed in $\delta$, especially in the high privacy regime. The NPO privacy profile is computed as described in \cite{balle2020privacy}.}
    \label{fig:priv_profile}
\end{figure}

\subsection{Subsampling amplification}
DP applications can profit from the \textit{privacy amplification by sub-sampling} property which states that when secret subsamples are drawn from the database, the non-inclusion of a proportion of the database in the query results proportionally amplifies their privacy guarantees. This property also holds for the GLRT relaxation. Specifically, we analysed the privacy amplification of add/remove Poisson sampling as discussed in \cite{balle2020privacy} which states that, if a mechanism is $(\varepsilon, \delta)$-DP, the subsampled version of the mechanism with sampling probability $\gamma$ satisfies $\log(1+\gamma(e^{\varepsilon}-1)), \gamma \delta$-DP. To compute the guarantee in practice, we thus instantiate $R(x)$ and $R'(x)$, compute $R_s(x)$, then use duality to convert to $R^{\ast}(x)$ and finally amplify the resulting privacy profile. Since the privacy profile encapsulates \textit{all} $(\varepsilon, \delta(\varepsilon))$-pairs, it can be composed losslessly (including the subsampling analysis) by taking the Fourier transform, composing the characteristic functions and then taking the inverse of the resulting function as described in \cite{zhu2022optimal}. Figure \ref{fig:subsampling} exemplifies the amplification of the GLRT relaxation's privacy profile in comparison to the sub-sampled version of the NPO privacy profile.

\begin{figure}[h]
    \centering
    \includegraphics[width=0.45\textwidth]{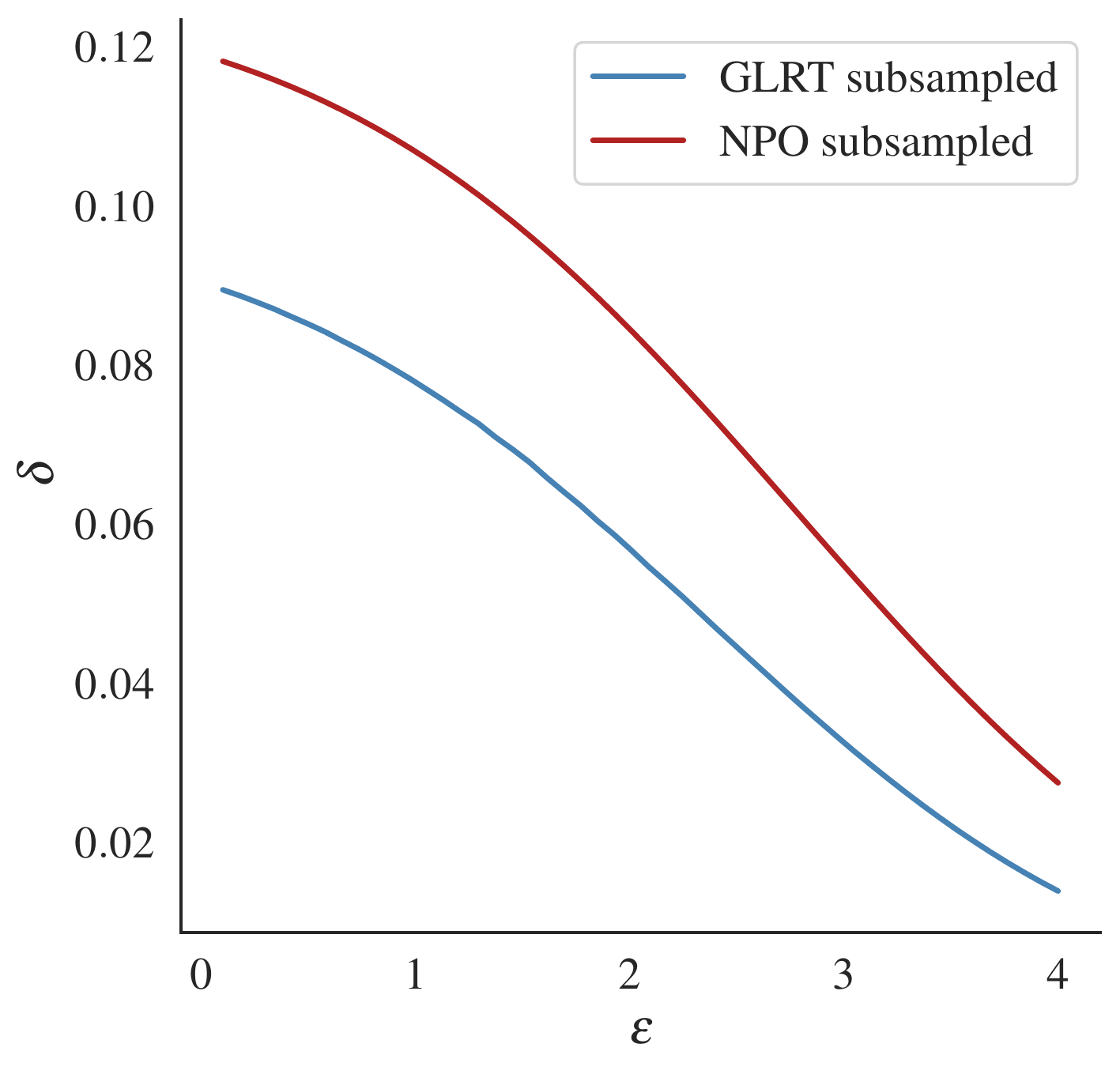}
    \caption{Subsampling amplification of the GLRT privacy profile at $\Delta=1, \sigma^2=36, \gamma=0.2$ computed over $N=100$ compositions, showing an improvement in $\delta$-to-$\varepsilon$ compared to the NPO privacy profile with the same parameters.}
    \label{fig:subsampling}
\end{figure}

\section{Numerical Experiments}
We conclude our investigation with a set of numerical experiments to demonstrate the tightness of our results in practice.

\subsection{Worst-case adversarial performance}
We begin by instantiating a worst-case (in terms of privacy) GLRT membership inference game. The adversary is faced with a pair of adjacent datasets containing at most two individuals and can control the query function so that its output on one of the individuals is $0$. Moreover, the query function is scalar-valued and has known global $L_2$-sensitivity $\Delta$. In the experiment below, $\Delta=1$. The noise added is of known magnitude $\sigma^2=36$. The adversary interacts with the system for $N=70$ composition rounds. We only require that the the adversary classifies the output of the final composition round correctly and they may use all information from the previous rounds. The adversarial task is thus:

\noindent
\textit{Classify an output $y=f(X)$ as coming from $\mathcal{N}(0, \sigma^2)$ or from $\mathcal{N}(\nu, \sigma^2)$ where $X$ is unknown and one of $D \coloneqq \lbrace A \rbrace$ or $D' \coloneqq \lbrace A,B \rbrace$ given that $\vert \nu \vert = \Delta$.} \newline

\noindent
In this case, the adversary proceeds as follows:
\begin{enumerate}
    \item Assume the distribution centred on $0$ is the \say{negative} class ($\mathcal{H}_0$) and the one at distance $\Delta$ is the positive class ($\mathcal{H}_1$).
    \item Collect $y_1, \cdots, y_N$ observations under composition.
    \item Compute  $T(y) = \left \vert \frac{1}{N}\sum_{i=1}^Ny_i \right \vert^2$.
    \item Set a threshold $c$ to match a desired power and significance level (or $P_d$/$P_f$). If $T(y) < c^2$, assign to the negative class (fail to reject the null), else to the positive class (reject the null).
    \item Repeat the process, but consider $\Delta$ the negative class and $0$ the positive one. 
\end{enumerate}

For our experiment, we flip a fair coin and sample either $y \sim \mathcal{N}(0, \sigma^2)$ or $y \sim \mathcal{N}(\Delta, \sigma^2)$ to reflect an indifferent prior. The process is repeated $1000$ times and at $100$ values of $c^2$. At each grid point, we measure the true state of the system and the adversary's decision and use them to compute the empirical ROC curves, which we compare to the expected curves from Theorem 1. Figure \ref{fig:predicted_actual} shows the empirical ROC curves in comparison with their theoretical counterparts whereas Figure \ref{fig:actual_symmetric_gaussian} shows them in comparison to the symmetrified ROC and the NPO ROC curve. The empirical ROC curves match the expected ones very well, indicating that our bounds are tight and that the GLRT relaxation indeed affords higher privacy than the NPO assumption.

\begin{figure}[h]
    \centering
    \includegraphics[width=0.45\textwidth]{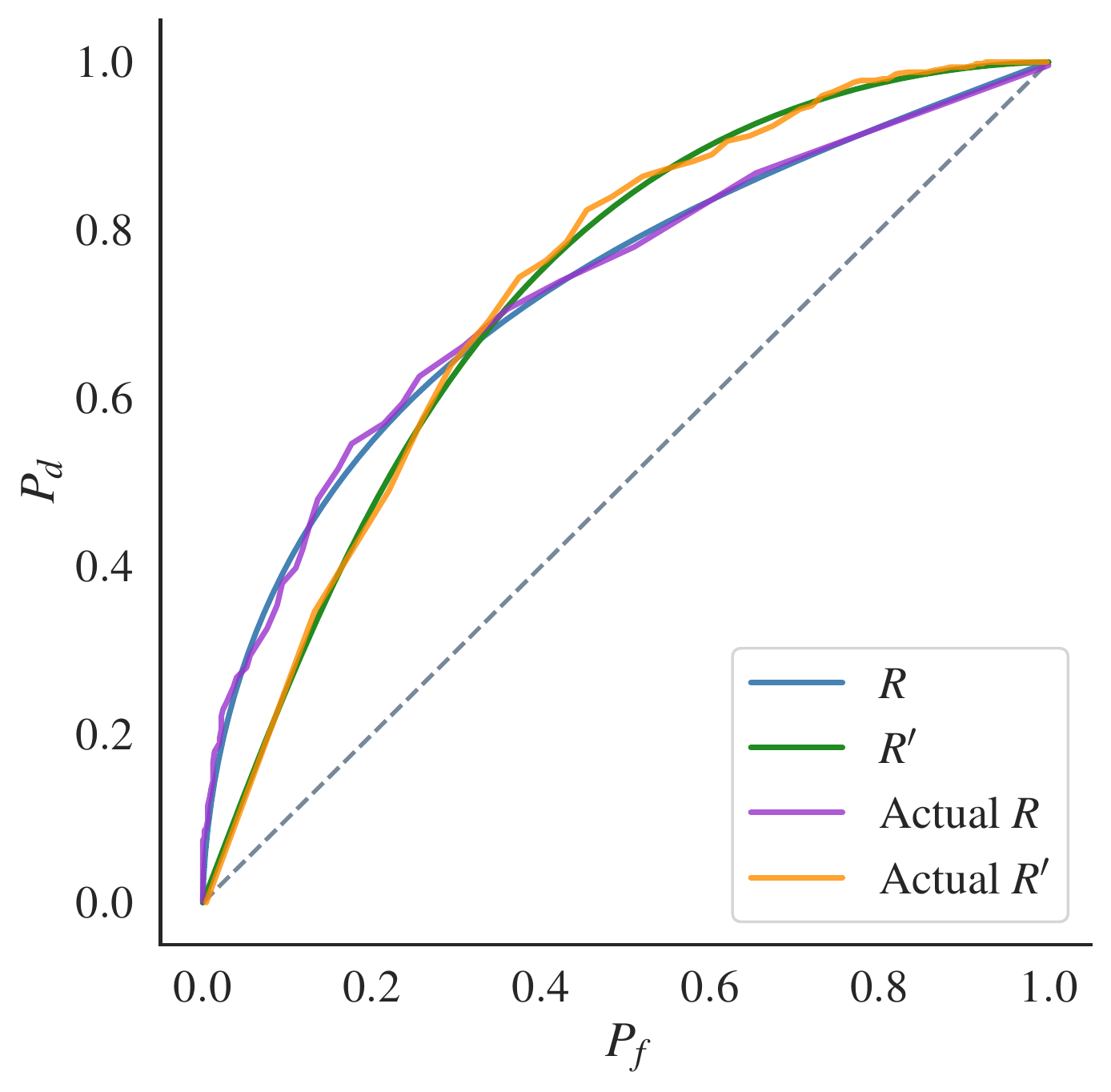}
    \caption{Empirical (purple and orange) and theoretical (blue and green) ROC curves for the worst-case adversary. $\Delta=1, \sigma^2=36, N=70$ composition rounds. The empirical curves match the theoretical ones. Average of $1000$ experiments and $100$ cutoff values. $(\varepsilon, \delta) = (2.94, 10^{-2})$.}
    \label{fig:predicted_actual}
\end{figure}

\begin{figure}[h]
    \centering
    \includegraphics[width=0.45\textwidth]{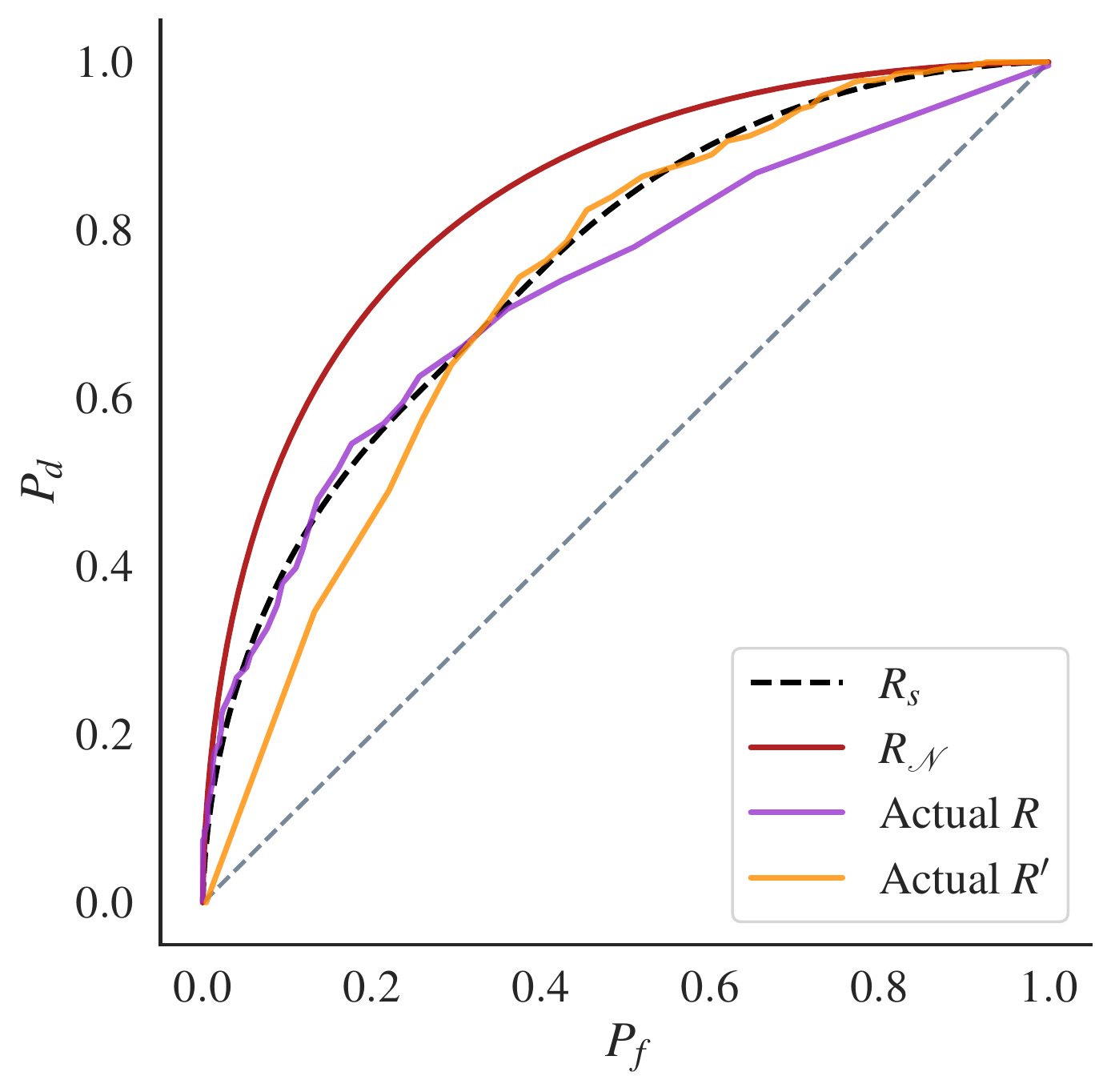}
    \caption{The same empirical (purple and orange) ROC curves for the experiment in Figure \ref{fig:predicted_actual}, but with the symmetrified theoretical curve (dashed black) and the NPO Gaussian mechanism curves (red) shown for comparison. The GLRT setting affords more empirical and theoretical privacy than the NPO regime. $(\varepsilon, \delta) = (2.94, 10^{-2})$ for the GLRT setting and $(3.63, 10^{-2})$ for the NPO setting. }
    \label{fig:actual_symmetric_gaussian}
\end{figure}

\subsection{Empirical performance in the high-dimensional query case}
To evaluate the theoretical claim that high-dimensional queries afford stronger privacy, we repeated the aforementioned experimental setting with the following modifications. Instead of a scalar query, the adversary now receives a $50$-dimensional vector with \textit{no capability to reduce the effective dimensionality} e.g. through auxiliary information. We assume $N=50$ composition rounds and $\Delta=1, \sigma^2=12.25$. As witnessed in Figure \ref{fig:high_dimensional}, the classification problem is indeed considerably harder for the adversary in this setting compared to the scalar problem above.

\begin{figure}[h]
    \centering
    \includegraphics[width=0.45\textwidth]{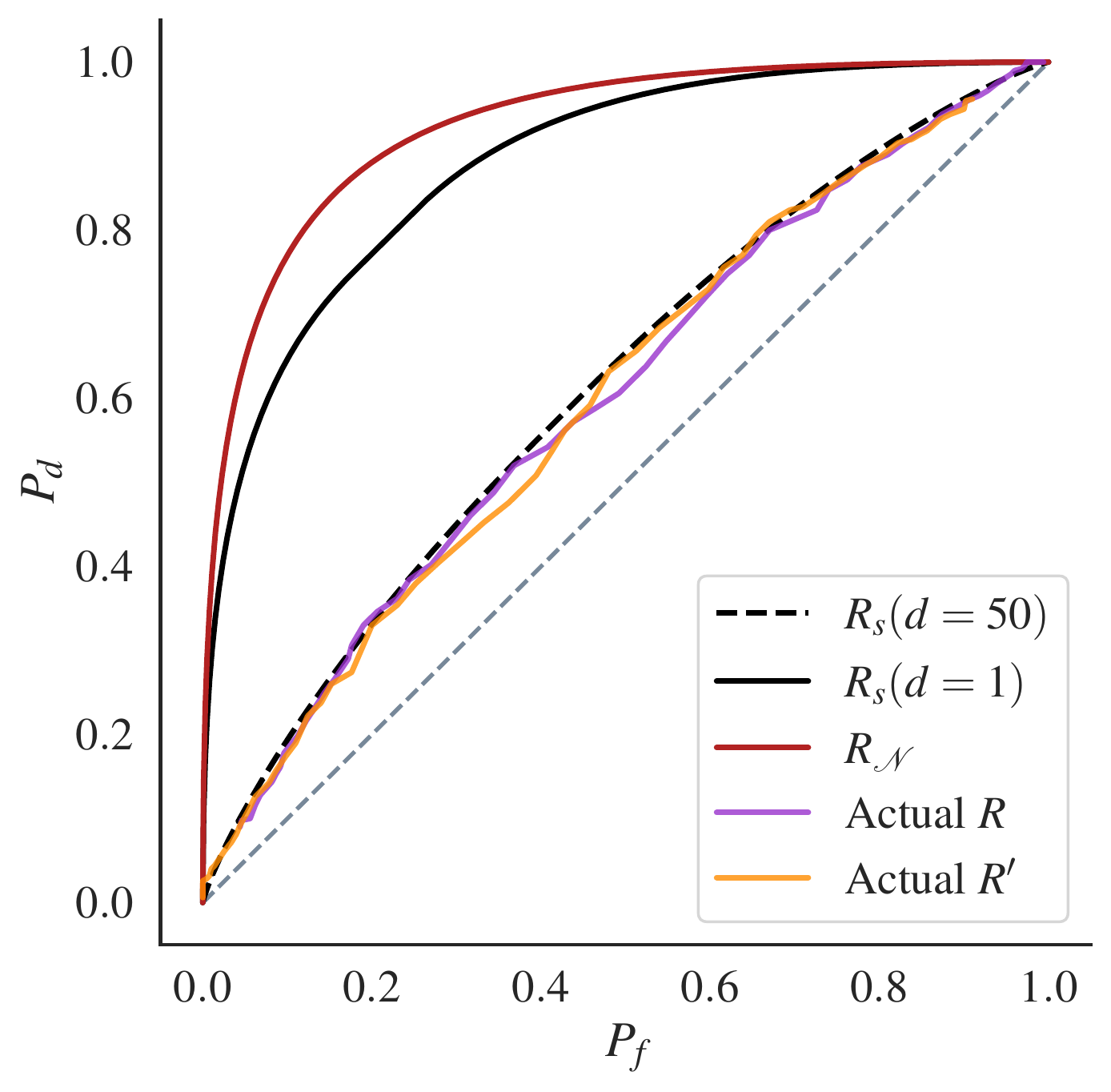}
    \caption{Empirical ROC curves (purple and orange) for the high-dimensional query experiment. Compared to the expected ROC curve for the same dimensionality (dashed black), the empirical result matches exactly. However, compared to the scalar case in the same setting (continuous black), the hypothesis test is considerably harder. The NPO adversary (red curve) has the easiest task. $(\varepsilon, \delta) = (0.76, 10^{-2})$ for $d=50$, $(5.39, 10^{-2})$ for $d=1$ and $(6.01, 10^{-2})$ for the NPO setting.}
    \label{fig:high_dimensional}
\end{figure}

\section{Conclusion}
The broad application of differential privacy will require our community to address legitimate stakeholder concerns about the privacy-utility trade-offs of privacy-preserving systems. Several prior works have empirically noted that, under realistic conditions, the guarantees of such systems are stronger than the worst case assumes. However, so far, there is a lack of investigations into \textit{formal} (rather than empirical) relaxations of the threat model and into the provision of tight guarantees in the spirit of DP. We find that the mild, and in many cases realistic, relaxation from an NPO to a GLRT adversary yields substantially amplified privacy bounds for the Gaussian mechanism. This is especially true for strongly private mechanisms with low sensitivity and high noise magnitudes. 

Our work should not be misconstrued as an attempt to undermine the DP guarantee. We explicitly request that our guarantees be given \textit{alongside} the worst-case guarantees and not in isolation, so as not to mislead stakeholders but instead to inform them more comprehensively about the privacy properties of sensitive data-processing systems. 

\printbibliography

\end{document}